\title{Online versus Offline Adversaries in Property Testing}
\author{{Esty Kelman\thanks{Boston University, Boston, MA, USA, and Massachusetts Institute of Technology, Cambridge, MA, USA.  Supported in part by the National Science Foundation under Grant No. 2022446 and in part by NSF TRIPODS program (award DMS-2022448). Email: \texttt{ekelman@mit.edu}.}} \and {Ephraim Linder\thanks{Boston University, Boston, MA, USA. Email: \texttt{ejlinder@bu.edu}}}\and {Sofya Raskhodnikova\thanks{Boston University, Boston, MA, USA. Email: \texttt{sofya@bu.edu}}}}
\date{}
\begin{document}
\maketitle
\begin{abstract}
We study property testing with incomplete or noisy inputs. The models we consider allow for \emph{adversarial} manipulation of the input, but differ in whether the manipulation can be done only \emph{offline}, i.e., before the execution of the algorithm, or \emph{online}, i.e., as the algorithm runs. The manipulations by an adversary can come in the form of erasures or corruptions. We compare the query complexity and the randomness complexity of property testing in the offline and online models. 
Kalemaj, Raskhodnikova, and Varma (Theory Comput.\ `23) provide properties that can be tested with a small number of queries with offline erasures, but cannot be tested at all with online erasures. We demonstrate that the two models are incomparable in terms of query complexity: we construct properties that can be tested with a constant number of queries in the online corruption model, but require querying a significant fraction of the input in the offline erasure model. We also construct properties that exhibit a strong separation between the randomness complexity of testing in the presence of offline and online adversaries: testing these properties in the online model requires exponentially more random bits than in the offline model, even when they are tested with nearly the same number of queries in both models. Our randomness separation relies on a novel reduction from randomness-efficient testers in the adversarial online model to query-efficient testers in the standard model.
\end{abstract}

\thispagestyle{empty}
\newpage

\pagenumbering{arabic} 
\section{Introduction}\label{sec:introduction}
Property testing~\cite{RubinfeldS96,GGR98} aims to lay algorithmic foundations for processing big data. It is a formal study of fast algorithms that accept objects with a given property and reject objects that are far. The goal of this work is to compare models of property testing that address the situations when the data that needs to be analyzed is not only large, but also incomplete or noisy. The models we consider allow for adversarial manipulation of the input, but differ in whether the manipulation can be done only {\em offline}, i.e., before the execution of the algorithm, or {\em online}, i.e., as the algorithm runs. The manipulations by an adversary can come in the form of erasures or corruptions.  
The offline erasure-resilient property testing model was proposed by Dixit et al.~\cite{DixitRTV18}. The model that captures offline corruptions is called {\em tolerant testing} and was introduced by Parnas et al.~\cite{ParnasRR06}. The online analogues of these models were recently defined by Kalemaj et al.~\cite{KalemajRV23}. 

We compare the query complexity and the randomness complexity of property testing in the offline and online models. In the offline-erasure model, a constant fraction of the input is erased adversarially before the execution of the algorithm, whereas in the online-erasure model, the adversary is allowed to make a fixed number of erasures after each query.
It may seem that the online adversary has strictly more power, and thus testing in the online models should require more queries. Indeed, \cite{KalemajRV23} justify this intuition and show that simple properties---sortedness and the Lipschitz property of sequences---cannot be tested at all with online erasures\footnote{In contrast, in the offline models, testing is always possible when the entire input is read.}, even though they are testable using a small number of queries when erasures are offline.
Intuitively, \cite{KalemajRV23} show that testing in the presence of one online erasure per query (performed by an adversary with the knowledge of previous queries) can be harder than testing when a constant fraction of the input is adversarially erased offline (in advance).
The following natural open question is stated in \cite{KalemajRV23}: ``Is there a property that
has smaller query complexity in the online-erasure-resilient model than in the (offline)
erasure-resilient model of \cite{DixitRTV18}?''
We answer this question in the affirmative: specifically, we construct a property that can be tested with a constant number of queries in the online-erasure model (and even in the (harder) online-corruption model), but requires a nearly linear number of queries in the offline-erasure model.   
We conclude that the two adversarial erasure models are incomparable.

Our second result concerns the randomness complexity of testing in the online and offline manipulation models. Randomness is an important tool in the design of a wide variety of algorithms, and extensive research has been conducted to understand the power of randomness in computation. 
The investigation of the randomness complexity of property testing algorithms was initiated by Goldreich and Sheffet \cite{GS10}. They showed that all property testers in the standard model can be derandomized to a large extent: they can be converted to testers that use the number of random bits that is logarithmic in the size of the input, while incurring at most a constant factor blowup in query complexity. The result of \cite{GS10} easily extends to all offline testing models, but does not apply in the online setting. Moreover, the existing testers in the online model \cite{KalemajRV23,MinzerZ24,Ben-EliezerKMR24,Westover2024,Arora2025} use lots of randomness to fool the adversary. A natural question that arises is how much derandomization is possible in the online setting.
We show that a lot of randomness is sometimes indeed necessary to test in the presence of an online adversary. In particular, we exhibit a property that requires exponentially more random bits to test in the online manipulation model than in the offline manipulation model. This result provides a natural converse to the statement of \cite{MinzerZ24} that ``the (online) adversary cannot foil our plan if there is no plan" --- i.e., random queries are robust to online manipulation of the input.

\subsection{Our results}
We represent the input to our property testing algorithm as a string. The algorithm is given access to its input string via queries (specifically, of the form ``what is the character at index $i$?"). For each proximity parameter $\eps\in(0,1)$, the algorithm has to distinguish between strings that have a specified property and strings that are $\eps$-far from having the property --- i.e., strings that need to be changed on at least an $\eps$-fraction of indices in order to satisfy the property. The query complexity (the number of queries to the input) of the tester is measured in terms of $\eps$ and the input length $n$. In this section, for simplicity, $\eps$ is considered to be a constant and omitted from the statements.

\subsubsection{Query complexity: Model Incomparability}

We show that the online and offline adversarial models of property testing are incomparable in terms of the query complexity.
Recall that \cite{KalemajRV23} prove that sortedness and Lipschitzness of sequences are efficiently testable with offline erasures, but cannot be tested at all even with one online erasure per query\footnote{We note that there is an even simpler property than sortedness and Lipschitzness of sequences that is easily testable with offline erasures or corruptions, but is not testable at all with online erasures. Moreover, this property is over the alphabet $\Sigma=\{0,1\}$. Specifically, $\cP=\{ww: w\in \{0,1\}^*\}$. We can estimate the distance to this property within additive error $\eps$ using $O(\frac 1{\eps^2})$ queries (and therefore it is easily tolerantly testable). However, we cannot test it with online erasures because, whenever the tester queries some position $p$ in an input string of length $2n$, the adversary can erase the corresponding position ($n+p$ if $p\leq n$ and $p-n$ if $p>n$).\label{footnote:ww-reverse}}.
 We complement this result with the following theorems, which show that testing in the presence of a small fraction of offline erasures can be harder than testing with online erasures. The first of this theorems (\Cref{thm:query separation intro version - general -informal 1}) considers the regime with many online erasures per query; the second  (\Cref{thm:query separation intro version - general -informal 1}) considers the regime with one erasure per query. The former exhibits a logarithmic gap in query complexity, where as the latter exhibits a nearly linear gap.
\begin{theorem}[Informal version of \Cref{cor:query separation cor 1}]\label{thm:query separation intro version - general -informal 1}
    There exists a property that can be tested using a \emph{constant number of queries} in the presence of an online adversary that makes $O\big(\frac n{\log n}\big)$ erasures per query; but requires \emph{$\widetilde{\Omega}(\log n)$ queries}\footnote{The notation 
    $\widetilde{\Omega}(f(n))$ hides $\polylog f(n)$ factors, e.g., $\widetilde{\Omega}(\log n)=\Omega(\log n/\polyloglog n)$.} to test when the characters at $\Theta\big(\frac n{\log\log n}\big)$ indices are erased offline (in advance).
\end{theorem}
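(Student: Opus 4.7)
The plan is to construct an explicit property $\mathcal{P}$ over some alphabet $\Sigma$ and analyze its query complexity in both models. I would design $\mathcal{P}$ so that the ``witnesses'' to being far from $\mathcal{P}$ are simultaneously (i) spread broadly over the input so that a few uniformly random queries suffice to find one, yet (ii) structurally tied to a small set of $\Theta(n/\log\log n)$ \emph{designated positions} that an adversary with advance knowledge of the tester can erase entirely. A natural template is to partition the input into $m = n/b$ blocks of size $b = \Theta(\log\log n)$, where each block has one designated position, and membership in $\mathcal{P}$ is determined by a local constraint relating the designated position to the rest of the block (so that a random query into a violating block catches a violation with good probability).

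For the online upper bound, the tester makes $O(1/\eps)$ uniformly random queries. The key observation is that if $x$ is $\eps$-far from $\mathcal{P}$, then an $\Omega(\eps)$ fraction of blocks are locally violating, so a single random query into a violating block witnesses non-membership. Even accounting for $O(n/\log n)$ erasures per query, the adversary can erase only $o(n)$ positions in total across the constant number of queries, and crucially cannot anticipate which (random) positions the tester will hit; a Markov/union-bound argument shows the tester still succeeds with constant probability. Amplification by repetition yields the required soundness.

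For the offline lower bound, I would apply Yao's minimax principle. I would construct a distribution $D_{\mathrm{yes}}$ over inputs in $\mathcal{P}$ and a distribution $D_{\mathrm{no}}$ over inputs that are $\eps$-far from $\mathcal{P}$ such that, conditioned on erasing all $\Theta(n/\log\log n)$ designated positions, the two distributions are statistically close on any fixed set of $q = o(\log n / \polyloglog n)$ non-designated query positions. Since the offline adversary knows the (randomized) tester and can deterministically erase all designated positions, the tester is reduced to distinguishing two close distributions using only non-designated coordinates, which should require $\widetilde{\Omega}(\log n)$ queries by an information-theoretic / coupling argument on the residual randomness in each block.

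The main obstacle will be engineering the block-level constraint so that it simultaneously supports both directions: the constraint must be \emph{robust} enough that a random query into a violating block hits a visible violation with constant probability (giving the constant-query online tester), yet \emph{fragile} enough that once the designated coordinates are erased the constraint encodes a genuine hidden $\sim \log n$-bit distinguishing task (giving the offline lower bound). Designing the constraint, tuning $b = \Theta(\log\log n)$ so that the $\widetilde{\Omega}(\log n)$ threshold is exactly matched, and handling the subtlety that online erasures do not help the tester (the adversary can always erase unqueried violating positions to mislead) is where the delicacy lies; I expect a combinatorial-design or coding-theoretic gadget to be needed to make the hidden task inside each block provably hard.
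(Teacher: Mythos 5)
Your proposal takes a different route from the paper and, as written, has two genuine gaps, one on each side of the separation. The paper does not build a property from scratch: it takes the known separation between the standard and offline-erasure models of \cite{BenFLR20} --- a property $\mathcal{Q}$ on strings of length $m$ that is $\eps$-testable with $(2/\eps)^{O(1)}$ queries in the standard model but requires $\widetilde{\Omega}(m)$ queries with an $\Omega(1/\log m)$ fraction of offline erasures --- and ``lifts'' it to $\mathcal{P}^r=\{x^r: x\in\mathcal{Q}\}$ via a repetition code with $r=n/\log n$ and $m=\log n$. The offline lower bound then transfers for free (an offline tester for $\mathcal{P}^r$ simulates one for $\mathcal{Q}$ with the same query count, since the adversary can erase the same positions in every copy), and the online upper bound is the content of \Cref{lem: lifting}.

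The first gap is the offline lower bound. You need a property that is constant-query testable in the standard model (this is forced: online testability with any erasure rate implies standard testability) yet requires $\widetilde{\Omega}(\log n)$ queries once a $\Theta(1/\log\log n)$ fraction of adversarially chosen positions is erased. This is exactly the (rescaled) standard-vs-offline separation, which is the main theorem of \cite{BenFLR20} building on \cite{FischerF06}, and its proof requires PCP-of-proximity--style machinery. Your sketch --- a per-block ``hidden distinguishing task'' revealed only through designated coordinates --- does not supply this; moreover your blocks have size $\Theta(\log\log n)$, so a hidden task confined to a block cannot by itself force $\widetilde{\Omega}(\log n)$ queries, and a task spanning blocks is not specified. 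You correctly flag this as the delicate point, but it is the entire difficulty of the theorem, not a tunable detail.

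The second gap is the online upper bound. Your tester must verify a \emph{local constraint relating the designated position to the rest of its block}, which requires at least two correlated queries into the same block of size $\Theta(\log\log n)$. An online adversary with $O(n/\log n)$ erasures per query simply erases the entire remainder of any block the tester touches, so no second consistent query into that block is ever answered; a Markov/union bound over random queries does not rescue this, because the adversary reacts to the tester's realized queries rather than committing in advance. The paper's repetition structure is designed precisely to defeat this: every logical coordinate has $r=n/\log n$ physical copies, the tester reads a coordinate by sampling a few random copies and checking consistency, and the adversary's total budget of $O(q^2\cdot n/\log n)=o(n)$ corruptions cannot cover a noticeable fraction of the copies of any coordinate. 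Some form of this redundancy (or another mechanism making each logical read unpredictable and repeatable) is essential and is absent from your block design.
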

We cannot expect to prove an analogue of \Cref{thm:query separation intro version - general -informal 1} for the case when the number of online erasures per query is the same as the total number of offline erasures. This is because, after the first query, an online adversary can erase the same part of the input that the offline adversary would have erased in advance.
\begin{remark}
    Although the online erasure rate in \Cref{thm:query separation intro version - general -informal 1} is very large, the total number of erasures available to the online adversary is smaller than the number of offline erasures we considered. It is an intriguing open question whether there exists a property for which every tester in the online model accumulates at least as many erasures as the best tester in the offline model, but also requires fewer queries to test in the online model than in the offline model.
\end{remark}
At the other end of the spectrum, we show that if the number of erasures is much smaller in the online model than in the offline model, then there is an exponentially large gap---i.e., there are properties that can be tested with a constant number of queries when the erasures are online but require a nearly linear number of queries in the offline erasure model. 
\begin{theorem}[Informal version of \Cref{cor:query separation cor 2}]
\label{thm:query separation intro version - general -informal 2}
     There exists a property that can be tested with a constant number of queries in the presence of an online adversary that makes one erasure per query, but requires $\widetilde{\Omega}(n)$ queries to test when the characters at $\Theta\big(\frac n{\log n}\big)$ indices are erased offline (in advance).
\end{theorem}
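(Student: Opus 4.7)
The plan is to construct a specific property $\cP$ and to establish the two bounds separately. The underlying asymmetry we exploit is that with $O(1)$ queries and one online erasure per query the adversary makes only $O(1)$ total erasures during the tester's execution, whereas the offline adversary can delete $\Theta(n/\log n)$ positions chosen adversarially after seeing the entire input. The property will be tailored so that membership is easy to verify at uniformly random positions, yet the structural signal that distinguishes $\cP$ from $\eps$-far inputs is concentrated in a sparse, input-dependent set of coordinates that the offline adversary can precisely target.

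For the upper bound I would exhibit a non-adaptive tester that samples $O(1/\eps)$ positions from a fixed distribution (independent of the answers) and checks a simple local predicate. Every string of $\cP$ passes; every $\eps$-far string contains enough ``witness'' coordinates that the predicate fails with constant probability. To lift this to the online-erasure model with one erasure per query, note that the tester's queries are fixed by its random tape, while the $O(1)$ adversarial erasures arrive one-by-one after the queries are made. By a union bound over the $O(1)$ pairs (erased position, future query position), the probability that any erasure blocks a future query is $O(1/n)$, so with overwhelming probability the online-erasure execution is identical to a standard execution and the tester's correctness transfers unchanged.

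For the lower bound I would apply Yao's minimax principle: exhibit a pair of input distributions $\mathcal{D}_{\mathrm{yes}}$ (supported on $\cP$) and $\mathcal{D}_{\mathrm{no}}$ (supported on $\eps$-far strings), together with an offline erasure strategy using $\Theta(n/\log n)$ erasures per input, such that after erasure the two distributions induce the same distribution over the transcript of any deterministic tester making fewer than $\widetilde{\Omega}(n)$ queries. On each input, the offline adversary identifies the distinguishing coordinates and erases exactly those; the remaining positions are designed to have the same joint distribution under $\mathcal{D}_{\mathrm{yes}}$ and $\mathcal{D}_{\mathrm{no}}$. A standard coupling argument then converts this statistical closeness into the claimed query lower bound via Yao's principle.

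The main obstacle is reconciling the competing demands on $\cP$: witnesses must be dense enough that a fixed distribution of $O(1)$ random queries succeeds, yet each input must admit a way to hide the entire distinguishing signal within only $\Theta(n/\log n)$ erasures. We would bridge this tension by letting the critical positions be chosen by the adversary in an input-dependent way, so that a priori each individual position appears unremarkable (enabling random sampling), while the adversary can identify the positions that actually separate $\mathcal{D}_{\mathrm{yes}}$ from $\mathcal{D}_{\mathrm{no}}$ on each sample. Verifying that the no-instances remain $\eps$-far from $\cP$ after erasure, and that the pattern of the $\Theta(n/\log n)$ erased locations is itself indistinguishable between the yes and no cases, is the technical heart of the argument.
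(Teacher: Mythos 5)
Your proposal correctly identifies the asymmetry being exploited (the online adversary accumulates only $O(1)$ erasures while the offline adversary targets $\Theta(n/\log n)$ positions adaptively in the input), but it stops short of the actual content of the theorem: the property $\cP$ is never constructed, and both halves of your argument are conditional on a construction that resolves exactly the tension you flag in your last paragraph. The crux is the offline lower bound. A property that is constant-query testable in the standard model yet requires $\widetilde{\Omega}(n)$ queries under a $\Theta(1/\log n)$ fraction of offline erasures is precisely the (nontrivial, PCP-flavored) separation of Ben-Eliezer, Fischer, Levi, and Rothblum \cite{BenFLR20} (\Cref{thm: stanadard offline sep}); your plan of designing fresh $\mathcal{D}_{\mathrm{yes}}/\mathcal{D}_{\mathrm{no}}$ distributions and an erasure strategy, then running Yao's principle, amounts to reproving that result from scratch, and the proposal gives no indication of how the hybrid/indistinguishability argument would go. The paper instead uses \Cref{thm: stanadard offline sep} as a black box and \emph{lifts} it: it sets $\cP := (\mathcal{Q}^{(\ell)})^r$, the $r$-fold repetition code of the hard property. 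The offline lower bound then transfers by an elementary reduction (an $\alpha$-erased $x$ maps to the $\alpha$-erased $x^r$ with the same distance to the property, so a tester for $\cP^r$ simulates one for $\mathcal{Q}^{(\ell)}$ with no query overhead), and the online upper bound follows because each logical query can be answered by reading a few \emph{uniformly random} copies among the $r$ repetitions, which the adversary cannot predict (\Cref{lem: lifting}).

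There is also a secondary gap in your online upper bound. The claim that ``the probability that any erasure blocks a future query is $O(1/n)$'' requires that, conditioned on everything the adversary knows (the tester's code and the query--answer history), each future query is close to uniform over a set of size $\Omega(n)$. A generic ``fixed distribution'' of $O(1/\eps)$ queries need not have this anti-concentration: if the tester places noticeable mass on particular indices, the adversary erases those after the first query. In the paper this is exactly what the repetition structure buys --- each simulated query is uniform over the $r$ copies, giving the bound $(q')^2\rate/r$ in \eqref{eq: corrupt query} --- and you would need to build an analogous source of unpredictability into whatever property you construct. Note also that $\mathcal{Q}^{(\ell)}$ itself is \emph{not} known to be online-testable; it is only the repeated version that is, which is why the lifting step cannot be skipped.
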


In fact, the query complexity separations in \Cref{thm:query separation intro version - general -informal 1,thm:query separation intro version - general -informal 2} hold even if the online adversary can make {\em corruptions} instead of erasures.

To prove \Cref{thm:query separation intro version - general -informal 1,thm:query separation intro version - general -informal 2}, we show that any property that separates the offline model from the standard one, i.e., a property that is easy to test in the standard model but hard in the offline model, can be ``lifted" to separate the offline and online models. We can then leverage the state of the art separation (between the standard and offline-manipulation testing models) of \cite{BenFLR20} to obtain strong separations between the online and offline manipulation models.

Our ``lifting" operation is simply encoding the property with a repetition code. Intuitively, repetition won't make the property easier to test in the presence of offline erasures since the same erasures can be made on all copies. 
Our main technical proof for this part is to show that repetition code is robust against online adversaries. We prove the following general result.
\begin{lemma}[Informal version of \Cref{lem: lifting}]
    If a property $\cP$ is testable when no adversary is present then the repeated version $\cP^r$, given by concatenating each string in $\cP$ with itself $r$ times, is testable in the presence of an online adversary (as long as the number of erasures per query is not too large).
\end{lemma}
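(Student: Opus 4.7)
The plan is to combine the standard-model tester $T$ for $\cP$ (with query complexity $q$) with a consistency check across the $r$ copies. Viewing the length-$rn$ input as $r$ blocks of length $n$ indexed by copy indices $j\in[r]$, the online tester $T'$ runs two sub-tests and accepts iff both do. The \emph{consistency sub-test} draws $O(1/\eta_0)$ triples $(i,j_1,j_2)$ with $i\in[n]$ and $j_1\neq j_2\in[r]$, queries positions $(j_1{-}1)n+i$ and $(j_2{-}1)n+i$, and rejects on any disagreement. The \emph{property sub-test} simulates $T$, answering each virtual query to position $i$ by querying $(j{-}1)n+i$ for a freshly drawn uniform $j\in[r]$. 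Set $\eta_0 = \min(\eps/2,\,c/q)$ for a small constant $c$.

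Completeness is immediate: on an input in $\cP^r$, every copy equals the same $y\in\cP$, so both sub-tests receive only consistent correct answers on non-erased queries and $T$ accepts $y$. For soundness, let $z$ be the per-coordinate majority string across the $r$ copies and let $\eta$ denote the fraction of entries of the input that differ from $z$. If $\eta>\eta_0$, a standard sampling bound shows that a random triple catches a disagreement with probability $\Omega(\eta_0)$, so $O(1/\eta_0)$ samples suffice for detection w.h.p. If instead $\eta\le \eta_0$, then the input is within $\eta_0 \cdot rn$ of $z^r$, forcing $z$ to be $(\eps-\eta_0)$-far from $\cP$; each simulated query independently disagrees with $z$ with probability at most $\eta_0\le c/q$, so a union bound over the $q$ queries of $T$ shows that, with high probability, the simulation sees $z$ unperturbed and rejects.

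The main obstacle is controlling the online adversary. The key leverage is that the copy index $j$ chosen for each query is drawn independently of the adversary's view, so before $T'$'s $k$-th query---when at most $(k-1)t$ erasures have been placed---the probability that the random copy queried is erased is at most $(k-1)t/r$, even if the adversary concentrates all its erasures on copies of the queried coordinate. Summing over the $Q=O(q+1/\eps)$ total queries yields a bound of $O(Q^2t/r)$ on the probability of ever hitting an erasure, which is $o(1)$ precisely when $t \ll r/Q^2$; this is the quantitative form of the lemma's ``not too many erasures per query'' hypothesis. Conditioning on this event reduces the analysis to the erasure-free case above. The same argument adapts verbatim to online corruptions: a random copy is corrupted with the same probability $(k-1)t/r$, and the event that no queried copy is ever corrupted places us back in the manipulation-free world.
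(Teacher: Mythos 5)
Your overall architecture matches the paper's: a consistency test across the $r$ copies, a simulation of the standard-model tester $\cT$ in which each virtual query is answered from a uniformly random copy, and a final union bound showing that uniformly random copy indices evade the online adversary's $O(Q^2 t/r)$ erasure/corruption mass. That last part, and the completeness and the consistency sub-test, are fine.

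There is, however, a genuine gap in your soundness analysis of the property sub-test. In the case $\eta\le\eta_0$ you claim that ``each simulated query independently disagrees with $z$ with probability at most $\eta_0\le c/q$.'' This does not follow: $\eta$ is the \emph{average} over positions $i\in[m]$ of $p_i:=\Pr_{j\sim[r]}[s[j]_i\neq z_i]$, but $\cT$ does not query uniformly random positions --- its queries are arbitrary and adaptive. The disagreement mass can be concentrated exactly on the positions $\cT$ queries: e.g., a single position $i^*$ with $p_{i^*}=1/2$ gives $\eta=1/(2m)$, which sails through your consistency test, yet a single-sample answer at $i^*$ is wrong with probability $1/2$. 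If $\cT$'s correctness hinges on that coordinate, your union bound over its $q$ queries gives nothing. (Also, for non-binary alphabets ``majority'' should be plurality, and then $p_i$ can even exceed $1/2$.) The paper closes this gap by answering each simulated query with $d=\log(O(q))$ independent copies and rejecting if they disagree: conditioned on the first sample not being the plurality value, each subsequent sample matches it with probability at most $1/2$, so the probability of silently feeding $\cT$ a non-plurality answer is at most $2^{-d+1}=O(1/q)$ per query, and the union bound over $\cT$'s $q$ queries goes through. This multiplies the query count by $\log q$ (hence the paper's $\widetilde{O}(q)$ bound and the $[q\log q]^2$ in the admissible erasure rate), but it is essential; your single-sample simulation as written is not sound.
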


\subsubsection{Randomness complexity separations}
As discussed earlier, the investigation of
the randomness complexity of property testing algorithms was initiated by Goldreich and Sheffet \cite{GS10} who showed that every property that is testable in the standard model using $q$ queries can be tested using $O(q)$ queries and $O(\log n)$ random bits. We build on their investigation by studying the randomness complexity of the erasure models. Though their result easily extends to the offline models, the same extension does not work in the online model. Indeed, we show that the derandomization of \cite{GS10} cannot be extended to the online testing model---that is, we construct a property  that requires $\omega(\log n)$ random bits to test in the online model. 

Note that a randomness separation trivially holds for properties which are not testable with online erasures, or when the offline-erasure-resilient tester is allowed to query the entire input\footnote{In this case, the tester in the offline model is deterministic. Note that it is impossible to deterministically
test any nontrivial property in the online model, since the adversary can erase all but the first query made by a deterministic tester.}.
Prior to our work it was not clear if there is a randomness separation that applies to properties that are testable in the online model, and still holds if we require that the testers in both models to have a sublinear (in $n$) query complexity. 

We demonstrate such a separation. In particular, we show that there exists a property that is testable in both models using a sublinear number of queries, 
and, moreover, testing this property in the online erasure model requires exponentially more random bits than testing it in the offline erasure model. 

\begin{theorem}[Informal version of \Cref{cor:rand sep cor 1}]\label{thm:randomness separation intro version -informal}
There exists a property that can be tested using $O(\sqrt{n})$ queries in the presence of either an online or an offline adversary.
The tester against the offline adversary uses $O(\log n)$ random bits, whereas every tester in the presence of an online adversary that makes one erasure per query requires $\widetilde{\Omega}(\sqrt{n})$ random bits to succeed with constant probability. 
\end{theorem}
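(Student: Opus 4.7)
My strategy is to exhibit a property $\cP$ whose standard-model query complexity is $\widetilde\Omega(\sqrt n)$ yet which admits an $O(\sqrt n)$-query standard tester using only $O(\log n)$ random bits, and to then deduce the randomness lower bound from the novel reduction announced in the introduction. For $\cP$ itself I would use an ``index-and-verify'' template: the input consists of a (redundantly encoded) short address and $\sqrt n$ candidate blocks of length $\sqrt n$, and $\cP$ asserts a structural condition on the block identified by the address. The natural standard tester reads the address with $O(\log n)$ random bits and verifies the corresponding block with $O(\sqrt n)$ queries; the matching $\widetilde\Omega(\sqrt n)$ standard-model lower bound follows from a Yao-style distributional argument, since without the address a tester cannot tell which block to inspect.

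To obtain the offline upper bound, I would apply the Goldreich--Sheffet derandomization of \cite{GS10} to the standard tester above; this transfers to the offline model because the erasure pattern is fixed before the tester runs, and the redundancy of the address encoding ensures that $\Theta(n/\log n)$ erasures still leave the address recoverable. For the online upper bound, a random-sampling tester using $\widetilde\Theta(\sqrt n)$ random bits suffices, since one erasure per query cannot meaningfully bias uniformly chosen queries. Finally, the randomness lower bound would follow by feeding any hypothetical online tester with $q = O(\sqrt n)$ queries and $r$ random bits into the reduction, obtaining a standard tester whose query complexity grows mildly in $r$, and invoking the $\widetilde\Omega(\sqrt n)$ standard-model lower bound to conclude $r = \widetilde\Omega(\sqrt n)$.

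The main obstacle will be establishing the reduction at the required parameters. The naive reduction that enumerates all $2^r$ randomness strings of the online tester and unions their query sets gives a standard tester with $q \cdot 2^r$ queries, yielding only the trivial $r = \Omega(\log n)$ bound already implicit in \cite{GS10}. The novel reduction must exploit the adversary's adaptive power: since the tester's randomness is drawn from a space of size $2^r$, the adversary can simulate the tester on all seeds and adaptively erase in a way that forces the tester to expend its queries on pre-determined ``low-information'' positions. Re-expressing this interaction as a standard-model tester whose query complexity is polynomial rather than exponential in $r$, and then calibrating it to the index-and-verify structure of $\cP$ so that the address-recovery and verification steps compose correctly against a low-randomness adversary, is where most of the technical work will lie.
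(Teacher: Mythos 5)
Your overall architecture is the right one --- test the property with uniform samples in both adversarial models, derandomize the offline tester via \cite{GS10}, and obtain the randomness lower bound by converting a low-randomness online tester into a low-query standard tester and invoking a standard-model query lower bound. You have also correctly identified the mechanism behind the reduction: the adversary simulates the tester on all random seeds consistent with the query-answer history and erases the most popular next query. But you leave the reduction as an open technical obstacle, and the missing step is a short counting argument, not a calibration to the structure of $\cP$: if the adversary erases the single most-queried index, then any index it chooses \emph{not} to erase is the next query under at most half of the surviving seeds, so each non-erased query eliminates at least half of the consistent seeds; after $r$ non-erased queries the seed is determined and every subsequent query can be erased. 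Hence an $r$-random-bit online tester yields a standard tester making at most $r$ queries (not merely ``polynomial in $r$''), which is \Cref{lem: rand-query reduction}. Note also that the resulting bound does not depend on the online tester's query count $q$ at all, so there is nothing to ``feed in'' about $q=O(\sqrt n)$.

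The more serious gap is your choice of property. For the index-and-verify construction, the online upper bound fails: the adversary knows the input, so it can decode the address itself and spend its one erasure per query inside the designated block of length $\sqrt n$; after $\Theta(\sqrt n)$ queries it has erased essentially the entire block, and any tester that must read $\Omega(\sqrt n)$ positions of that specific block cannot succeed. (The uniform-sampling argument you invoke only controls the erasure probability when queries are uniform over all of $[n]$, not when they are concentrated on a $1/\sqrt n$ fraction of the input.) Moreover, your standard-model lower bound rests on ``without the address a tester cannot tell which block to inspect,'' but in the standard model the address is readable with $O(\log n)$ queries, so the hardness must come entirely from verifying the block, which you have not specified. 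The paper sidesteps both issues by taking $\cP$ to be $\tau$-Distinct-Elements with $\tau=\Theta(\sqrt n/\log n)$: the property is symmetric, so it is testable by uniform sampling (hence robust to one erasure per query, since $q^2/n=o(1)$ with this choice of $\tau$), and it inherits an $\Omega(\tau/\log\tau)=\widetilde\Omega(\sqrt n)$ standard-model query lower bound from the support-size estimation hardness of \cite{ValiantV10} combined with the reduction to sample-based testers for symmetric properties. You would need either to switch to such a property or to supply both the online upper bound and the standard lower bound for a concrete instantiation of yours.
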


\Cref{thm:randomness separation intro version -informal} follows easily from a more general result, \Cref{thm: rand sep}, which we prove in \Cref{sec:randomness-separation} and which yields meaningful separations for larger online erasure rates as well. The main technique we develop to prove a lower bound on the randomness complexity of testing with an online adversary is a transformation from randomness-efficient testers in the online model to query efficient testers in the standard model. 
A special case of the guarantees of our transformation is stated in \Cref{lem: rand-query reduction intro}. (The general version appears in \Cref{lem: rand-query reduction}.)

To prove \Cref{thm: rand sep}, we combine our transformation with existing results regarding the property of strings called $\tau$-Distinct-Elements, which is parametrized by $\tau\in\N$, and consists of strings that have at most $\tau$ distinct characters. Testing $\tau$-Distinct-Elements was recently investigated in \cite{GoldreichR23,AdarFL24,AdarF24,PintoH24}, and it is a natural testing version of the distinct elements approximation problem studied in \cite{RaskRSS07,ValiantV10}. For a more detailed exposition regarding the history of the $\tau$-Distinct-Elements property and its variants see \cite{PintoH24}.

\begin{lemma}[Special case of \Cref{lem: rand-query reduction} for one online erasure per query]
    \label{lem: rand-query reduction intro}
    Fix $r\in\N$.  If a property  $\cP$ can be tested using $r$ random bits in the presence of an online adversary that makes one erasure per query, then $\cP$ can be tested in the standard model (with the same proximity parameter $\eps$) using at most $r$ queries. 
\end{lemma}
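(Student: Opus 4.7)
The plan is to construct a standard-model tester $T'$ that simulates the online tester $T$ against a particular transcript-only online adversary $A$, querying the input $x$ only when $A$'s erasure fails to block $T$'s current query. On input $x$, $T'$ samples $\rho\in\{0,1\}^r$ uniformly and simulates $T(\rho)$ step by step, maintaining a candidate set $S\subseteq\{0,1\}^r$ of random strings whose induced transcript under $A$ matches the one $T'$ has produced so far (initially $S=\{0,1\}^r$). After each simulated step, $A$ computes, for every $\rho'\in S$, the next query $q_{\mathrm{next}}(\rho')$ that $T(\rho')$ would issue given the current transcript, and commits to erasing the mode $p^\star$ of these predicted positions. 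When $T$ issues query $q$: if $q$ matches a previously-erased position, $T'$ returns $\perp$ to $T$ without touching $x$; otherwise $T'$ actually queries $x[q]$ and returns the true value. Finally, $T'$ outputs whatever $T$ outputs. Since $A$'s strategy at every step is a deterministic function of the transcript and of the publicly-known code of $T$, $A$ is a legitimate online adversary, and by the correctness of $T$ against every such adversary, $T'$ inherits the $2/3$ success probability.

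\textbf{Query bound.} The key combinatorial claim is that each actual input query (each non-$\perp$ response) halves $|S|$. Let $p^\star$ be $A$'s most recent erasure and let $c=|\{\rho'\in S:q_{\mathrm{next}}(\rho')=p^\star\}|$, which by the mode-choice equals $\max_p|\{\rho'\in S:q_{\mathrm{next}}(\rho')=p\}|$. An actual query occurs exactly when the true next query $q$ differs from $p^\star$; then the updated set $S'=\{\rho'\in S:q_{\mathrm{next}}(\rho')=q\}$ satisfies both $|S'|\le c$ (by maximality of $c$) and $|S'|\le|S|-c$ (since the $\rho'$'s predicted to query $p^\star$ are excluded), so $|S'|\le\min(c,|S|-c)\le|S|/2$. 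Starting from $|S_0|=2^r$, the constraint $|S|\ge 1$ forces at most $r$ actual queries; after that $A$ knows $\rho$ exactly, predicts every subsequent query of $T(\rho)$, and all remaining responses are $\perp$.

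\textbf{Main obstacle.} The crux of the proof is this factor-two shrinkage argument via the mode-prediction strategy, together with the (slightly delicate) verification that $A$ is genuinely a function of the observed transcript and of the public code of $T$ alone, so that $T$'s worst-case correctness guarantee actually applies to it. A further subtle accounting issue is the very first query, which inevitably yields a non-$\perp$ response since no erasure has yet been committed; this initial query is absorbed into the $r$-query budget by tracking the argument information-theoretically, as a reduction in $\log_2|S|$ from $r$ down to $0$ across the full execution.
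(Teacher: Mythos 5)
Your proposal is correct and follows essentially the same route as the paper: the paper's proof constructs exactly this ``random seed elimination'' adversary that erases the most-predicted next query (the mode, in the case of one erasure per query), argues that every non-erased query at least halves the set of consistent random seeds, and then obtains the standard-model tester by simulating the online tester against this adversary, querying the input only on non-erased steps. The first-query accounting subtlety you flag is glossed over in the paper's proof as well; it costs at most one extra query and is immaterial to the asymptotic applications.
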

The technique from \Cref{lem: rand-query reduction intro}, combined with the derandomization results of \cite{GS10}, immediately yields a separation for $\eps=o\big(\frac{1}{\log n}\big)$: 
Consider the property $\cP\subset\zo^n$ of being the zero string. By the folklore property testing bound, $\Theta(\frac 1\eps)$ queries are necessary and sufficient to test $\cP$. By \cite{GS10}, there exists a tester in the offline erasure model that uses $O(\log n)$ random bits. However, by \Cref{lem: rand-query reduction intro}, every tester in the online model must use $\Omega(\frac 1\eps)=\omega(\log n)$ random bits. While this works for small $\eps$, the separation we show in \Cref{thm:randomness separation intro version -informal} (and the more general \Cref{thm: rand sep}) holds for constant $\eps$ as well. 

\subsection{Related work}\label{sec:relationship}
\paragraph{Relationship between offline and standard models}
In terms of query complexity, testing with offline erasures lies between standard and tolerant testing. By definition, an offline-erasure-resilient tester is also a property tester in the standard model, because it has to work on the inputs with no erasures. 
Not surprisingly, offline testing with erasures is no harder than offline testing with corruptions. This is formalized in \cite[Theorem 1.4]{DixitRTV18} that shows that
the existence of a tolerant tester
for a property implies the existence of an offline-erasure-resilient tester for that property for a comparable setting of parameters. There are also separations that show that standard testing is strictly easier than offline-erasure-resilient testing, which in turn is strictly easier than tolerant testing. Specifically,  Dixit et
al.~\cite{DixitRTV18} showed that a property defined by 
Fischer and Fortnow \cite{FischerF06} can be tested in the standard model with a constant number of queries, but requires $n^{\Omega(1)}$ queries in the offline-erasure-resilient model. This separation was strengthened by Ben-Eliezer et al.~\cite{BenFLR20}, improving $n^{\Omega(1)}$ to $\widetilde{\Omega}(n)$. Finally, Raskhodnikova et al.~\cite{RRV21} prove a separation similar to that of \cite{DixitRTV18} between offline-erasure-resilient testing and tolerant testing. Thus, at a high level, we have a complete picture in terms of the relative difficulty of testing in the three offline models.
\paragraph{Relationship between offline and online testing}
Kalemaj et al.~\cite{KalemajRV23} exhibit two properties of strings for which testing with online erasures is impossible for every proximity parameter $\eps\leq  \frac 1 2$, whereas testing in the offline models is easy. 
The first property, {\em sortedness}, consists of sorted arrays, i.e., strings $x$ of length $n$ such that $x_i\leq x_{i+1}$ for all $i\in[n-1]$. 
The query complexity of testing sortedness (for constant $\eps$) is $\Theta(\log n)$
in both the standard and the offline erasures model (when the erased part is at most a constant fraction of the input) \cite{EKKRV,DGLRRS99,Ras99,Fischer04,BGJRW12,ChSe14,Enc1,Belovs18,DixitRTV18}.
The second property, Lipschitzness, consists of  strings 
$x\in \{0,1,2\}^n$ satisfying $|x_i- x_{i+1}|\leq 1$ for all $i\in[n-1]$. It can be tested with $\Theta(\frac 1\eps)$ queries in both the standard and the offline models \cite{JhaR13,DixitRTV18}.

\section{Preliminaries}\label{sec:model-definitions}

\paragraph{Notation} We use $[n]$ to denote the set of integers $\{1,2,\dots, n\}$, and $\N$ to denote the set of positive integers.

\paragraph{Property testing}
We start by stating standard property testing definitions. We represent an input to a property testing algorithm as a string of length $n$ over some alphabet $\Sigma_n$ that might depend on $n$. For example, $\Sigma_n$ might be $\{0,1\}$ or $[n]$.

\begin{definition}[Relative Hamming distance, property, $\eps$-far]
\label{def:property}
The \emph{relative Hamming distance} between two strings $x,y$ of length $n$ is $\delta_H(x,y)=\Pr_{i\sim [n]}[x_i\neq y_i]$ where $i$ is uniformly random index from $[n]$. For a set $\cS$ of strings of length $n$, define $\delta_H(x,\cS)=\min_{y\in \cS} \delta_H(x,y)$. A property $\cP$ is a subset of $\Sigma^*$ given by $\bigcup_{n\in\N} \cP_n$, where each $\cP_n$ consists of strings of length $n$ over some alphabet $\Sigma_n$. 
A string $x$ of length $n$ is \emph{$\eps$-far} from $\cP$ if $\delta_H(x,\cP_n)\geq \eps$.
\end{definition}
\begin{definition}[$\eps$-tester in the standard model \cite{RubinfeldS96,GGR98}]\label{def:property-testing}
    
    For every property $\cP\subseteq \Sigma^*$ and proximity parameter $\eps\in(0,1)$, an algorithm $\cT$ is an \emph{$\eps$-tester} for $\cP$ if, given a parameter $n\in \N$ and oracle access to input $x\in\Sigma^n$, the algorithm $\cT$ accepts with probability at least $\frac 23$ whenever $x\in\cP$ and rejects with probability at least $\frac 23$ whenever $x$ is $\eps$-far from $\cP$. A tester has a one-sided error if it always accepts inputs $x\in\cP$.
\end{definition}
\paragraph{Offline manipulations}
The offline model with erasures was introduced by \cite{DixitRTV18} and subsequently studied in~\cite{RV18,RRV21,BenFLR20,LeviPRV22}. The definition we use here is adapted from \cite{LeviPRV22} and only differs from the original definition in how the parameter $\eps$ is interpreted. We use $\perp$ to denote an erased symbol in the input.

\begin{definition}[$\alpha$-erased string, completion]
For each $\alpha\in (0,1)$, a string $x\in (\Sigma\cup \{\perp\})^n$ is $\alpha$-erased if at most an $\alpha$ fraction of symbols in it are $\perp$.  The indices of the $\perp$ symbols in the string are called {\em erased}. 
A string $y\in\Sigma^n$ that differs from a string $x\in(\Sigma\cup\{\perp\})^n$ only on indices erased in $x$ is called a {\em completion} of~$x$.
\end{definition}
\begin{definition}[Offline-erasure-resilient tester \cite{DixitRTV18}]\label{def:erasure-resilient-tester}
For every property $\cP\subset \Sigma^*$ and parameters $\alpha\in[0,1)$ and $\eps\in(0,1)$, an algorithm $\cT$ is an {\em $\alpha$-offline-erasure-resilient $\eps$-tester} for $\cP$ if, given a parameter $n\in\N$ and oracle access to an $\alpha$-erased input $x\in\Sigma^n$, the algorithm $\cT$ 
 accepts with probability at least $\frac 23$ whenever there exists a completion $y\in\cP$ of $x$ and 
 rejects with probability at least $\frac 23$ whenever every completion $y$ of $x$ is $\eps$-far from~$\cP$.

\end{definition}
When $\alpha=0$, the $\alpha$-offline-erasure-resilient model is the same as the standard model. 
Another generalization of property testing is tolerant testing, introduced by \cite{ParnasRR06} and extensively studied over the last decades.
It can be viewed as guaranteeing resilience to an $\alpha$ fraction of the input being corrupted by an offline adversary.

\begin{definition}[$(\alpha,\eps)$-tolerant tester \cite{ParnasRR06}]
    For every property $\cP\subset \Sigma^*$ and parameters $\alpha,\eps\in(0,1)$, an algorithm $\cT$ is an \emph{$(\alpha,\eps)$-tolerant tester} for $\cP$ if, given a parameter $n\in\N$ and oracle access to input $x\in\Sigma^n$, the algorithm $\cT$ accepts with probability at least $\frac 23$ whenever $x$ is $\alpha$-close to $\cP$ and rejects with probability at least $\frac 23$ whenever $x$ is $\eps$-far from~$\cP$.
\end{definition}
When we want to stress comparison to other models we study, we call an $(\alpha,\eps)$-tolerant tester an {\em $\alpha$-offline-corruption-resilient $\eps$-tester}.

\paragraph{Online manipulations}
In this model, the input string $x\in\Sigma^n$ is accessed via an adversarial oracle $\cO$. After answering each query made by the algorithm, the adversary can erase or corrupt a small number of data points. At the beginning of the execution of the algorithm, $\cO(i)=x_i$ for all $i\in[n]$. So, the first query is always answered correctly. The number of queries the adversary can manipulate {\em after} answering each query is parameterized by $\rate\in \N$. The manipulated values are used by the oracle to answer future queries to the corresponding indices. The algorithm does not know which input locations have been tempered with.
As stated in \cite{KalemajRV23}, ``the actions of the oracle can depend on the input, the queries made so far, and even on the publicly known code that the algorithm is running, but {\em not} on future coin tosses of the algorithm.''

\begin{definition}[Adversarial oracle, online testers \cite{KalemajRV23}]
Fix $\rate\in \N$. A {\em $\rate$-online-erasure} oracle can replace values $\cO(i)$ on up to $\rate$ indices $i\in[n]$ with the erasure symbol $\perp$ after answering each query. A {\em $\rate$-online-corruption} oracle is defined analogously, except that it replaces each symbol with an arbitrary symbol from $\Sigma$ instead of erasing it.
For each $\rate\in \N$, a {\em $\rate$-online-erasure-resilient $\eps$-tester} $\cT$ is defined as in the standard model (\Cref{def:property-testing}), except that it accesses its input via a $\rate$-online-erasure oracle as opposed to querying the input directly. The {\em $\rate$-online-corruption-resilient $\eps$-tester} is defined analogously.
\end{definition}

The standard property testing model is a special case of this enhanced model and corresponds to the case when $\rate=0$. Moreover, it is clear from the definition that testing with online erasures is no harder than testing with online corruptions. Specifically, every $\rate$-online-corruption-resilient tester can be simulated by a tester that accesses its input via a $\rate$-online-erasure oracle---the tester can simply replace each $\perp$ with an arbitrary value from~$\Sigma$.

\section{Query complexity separation}
\label{sec:query-separation}

In this section, we prove that testing with an offline adversary can require more queries than testing with an online adversary.

\begin{theorem}[Query complexity separation]
    \label{thm: query sep}
    Fix  $\ell\in\N$ and let $r=r(n)<n$ be a function of the input length $n$. 
    There exists a property $\cP\subseteq\zo^*$ and a constant $\eps_1=\eps_1(\ell)\in(0,1)$ such that:\footnote{We use  $\log^{(\ell)}$ to denote the $\log$ function applied $\ell$ times.} 
    \begin{enumerate}
        \item\label{thm: query sep item 1} For all $\eps\in(0,1)$ and $\rate=(\frac \eps 2)^{O(\ell)}\cdot r$, the property $\cP$ is $\rate$-online-corruption-resiliently $\eps$-testable using $(\frac 2 \eps)^{O(\ell)}$ queries. 
        \item\label{thm: query sep item 2} For all $n\in\N$, $\eps\in(0,\eps_1),$ and $\alpha=\Omega\left(\frac{1}{\log^{(\ell)}(n/r)}\right)$ such that $\eps+\alpha<1$, every $\alpha$-offline-erasure-resilient $\eps$-tester for $\cP$ must make $\Omega\left(\frac{n}{r\cdot\polylog^{(\ell)}(n/r)}\right)$ queries on inputs of length~$n$. 
    \end{enumerate}
\end{theorem}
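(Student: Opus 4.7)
The plan is to construct $\cP$ as the $r$-fold repetition of a carefully chosen base property $\cP_0$, and then invoke the lifting lemma (informally stated in the introduction) for the online upper bound together with a direct reduction for the offline lower bound. Concretely, I would let $\cP_0 \subseteq \{0,1\}^*$ be an $\ell$-parameterized instantiation of the separating property of Ben-Eliezer et al.~\cite{BenFLR20}, which is testable in the standard model with $(2/\eps)^{O(\ell)}$ queries but, for every constant $\eps < \eps_1(\ell)$ and every $\alpha = \Omega(1/\log^{(\ell)} m)$ with $\alpha+\eps<1$, requires $\Omega(m/\polylog^{(\ell)} m)$ queries to test in the $\alpha$-offline-erasure-resilient model on inputs of length $m$. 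Set $\cP = \{x^r : x \in \cP_0\}$, where $x^r$ denotes the $r$-fold concatenation of $x$, viewed as a property of strings of length $n = m r$.

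For Item~\ref{thm: query sep item 1}, since $\cP_0$ admits an $\eps$-tester in the standard model using $(2/\eps)^{O(\ell)}$ queries, the lifting lemma directly produces a $\rate$-online-corruption-resilient $\eps$-tester for $\cP = \cP_0^r$ using $(2/\eps)^{O(\ell)}$ queries at corruption rate $\rate = (\eps/2)^{O(\ell)}\cdot r$. Intuitively, the $r$-fold repetition yields a factor-$r$ buffer against adversarial corruptions: if the tester makes $q=(2/\eps)^{O(\ell)}$ queries and the adversary injects at most $\rate$ corruptions per query, the overall density of corrupted positions is at most $q\rate/n = O(1/m)$, far below the $\eps$-threshold on any single copy, so sampling one copy uniformly and running the standard $\cP_0$-tester on it still succeeds.

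For Item~\ref{thm: query sep item 2}, I prove the offline lower bound by reduction: any $\alpha$-offline-erasure-resilient $\eps$-tester $\cT'$ for $\cP$ with query complexity $q'(n)$ yields an $\alpha$-offline-erasure-resilient $\eps$-tester $\cT$ for $\cP_0$ on length-$m$ inputs with the same query complexity. Given an $\alpha$-erased input $x \in (\{0,1\}\cup\{\perp\})^m$, $\cT$ simulates $\cT'$ on the virtual string $x^r$, answering each query to index $i\in[n]$ by querying position $((i-1)\bmod m)+1$ of $x$. Because the erasure pattern of $x^r$ is exactly the $r$-fold copy of that of $x$, every completion $z$ of $x^r$ splits into blocks $z^{(1)},\dots,z^{(r)}$ each of which is itself a completion of $x$. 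Hence, if some completion of $x$ lies in $\cP_0$, its $r$-fold repetition lies in $\cP$ and completes $x^r$; conversely, if every completion of $x$ is $\eps$-far from $\cP_0$, then for any $w^r\in\cP$,
\[
\delta_H(z, w^r) \;=\; \frac{1}{r}\sum_{i=1}^{r}\delta_H(z^{(i)}, w) \;\geq\; \eps,
\]
so every completion of $x^r$ is $\eps$-far from $\cP$. Invoking the base lower bound for $\cP_0$ at the same $\alpha = \Omega(1/\log^{(\ell)}(n/r))$ then yields $q'(n) \geq \Omega\bigl(m/\polylog^{(\ell)} m\bigr) = \Omega\bigl(n/(r\cdot \polylog^{(\ell)}(n/r))\bigr)$, as required.

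The main obstacle is the quantitative content of the lifting lemma needed for Item~\ref{thm: query sep item 1}: one must show that the natural repeated-code tester for $\cP_0^r$ survives an online corruption rate as large as $(\eps/2)^{O(\ell)}\cdot r$ even when the adversary watches past queries and reacts adaptively. Density alone is insufficient because the adversary can concentrate its $q\rate$ injected corruptions on the copies that the tester is about to touch; the defense must exploit the unpredictability of the tester's future random choices, so that the adversary's adaptive placement cannot anticipate the fresh random samples. Provided the lifting lemma is established at this level of granularity and the Ben-Eliezer et al.~base separation is available in the $\ell$-parameterized form stated above, the two items combine to give the claimed query-complexity separation.
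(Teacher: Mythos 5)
Your proposal is correct and follows essentially the same route as the paper: take the Ben-Eliezer--Fischer--Levi--Rothblum separating property as the base, lift it with the $r$-fold repetition code, invoke the lifting lemma for the online upper bound, and prove the offline lower bound via the simulation of the $\cP^r$-tester on the virtual string $x^r$ (with the same distance- and erasure-preservation observations). The paper likewise treats the lifting lemma as the main technical component and proves it separately, so deferring it as you do is consistent with the paper's structure.
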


For $\ell=1$ and $r=\frac n{\log n}$, we obtain the following corollary.
\begin{corollary}\label{cor:query separation cor 1}
There exist a property $\cP$ such that for all $\eps\in(0,1)$ and $\rate\le \poly(\eps)\cdot \frac{\plen}{\log \plen}$, the property $\cP$ is $\eps$-testable using $\poly(\frac 1 \eps)$ queries in the $\rate$-online-corruption model. However, for all $\alpha=\Omega\left(\frac{1}{\log\log \plen}\right)$ such that $\alpha+\eps<1$, every $\alpha$-offline-erasure-resilient $\eps$-tester requires $\widetilde{\Omega}(\log \plen)$ queries.    
\end{corollary}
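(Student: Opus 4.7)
The plan is to derive this corollary as a direct instantiation of \Cref{thm: query sep} with the parameters $\ell=1$ and $r=r(n)=\frac{n}{\log n}$. The property $\cP$ from \Cref{thm: query sep} will serve as the desired property in the corollary; there is nothing to construct beyond checking that the parameter settings in the two statements align.

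For the upper bound (online corruption side), I would substitute $\ell=1$ and $r=\frac{n}{\log n}$ into item~\ref{thm: query sep item 1} of \Cref{thm: query sep}. The admissible erasure rate becomes $\rate = (\eps/2)^{O(\ell)}\cdot r = \poly(\eps)\cdot \frac{n}{\log n}$, matching the hypothesis of the corollary, and the query complexity is $(2/\eps)^{O(\ell)} = \poly(1/\eps)$, as required. So the upper bound half is an immediate substitution.

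For the lower bound (offline erasure side), I would substitute the same parameters into item~\ref{thm: query sep item 2}. Since $\log^{(1)}(n/r) = \log(\log n)$, the allowed erasure fraction is $\alpha = \Omega\!\left(\frac{1}{\log\log n}\right)$, exactly as in the corollary. For the query lower bound, the expression $\Omega\!\left(\frac{n}{r\cdot \polylog^{(\ell)}(n/r)}\right)$ becomes
\[
\Omega\!\left(\frac{n}{(n/\log n)\cdot \polylog(\log n)}\right) = \Omega\!\left(\frac{\log n}{\polylog\log n}\right) = \widetilde{\Omega}(\log n),
\]
as claimed. The constant $\eps_1$ in \Cref{thm: query sep} depends only on $\ell=1$, so it is an absolute constant here, and the condition $\eps+\alpha<1$ is the same in both statements.

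The main (and essentially only) thing to be careful about is verifying that the asymptotic bookkeeping in $\polylog^{(\ell)}$ and $\log^{(\ell)}$ behaves as expected at $\ell=1$; once that is confirmed, the corollary is immediate. Since the heavy lifting is done by \Cref{thm: query sep}, there is no separate technical obstacle to overcome here.
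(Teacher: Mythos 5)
Your proposal is correct and matches the paper exactly: the corollary is obtained by instantiating \Cref{thm: query sep} with $\ell=1$ and $r=\frac{n}{\log n}$, and your bookkeeping ($n/r=\log n$, so $\log^{(1)}(n/r)=\log\log n$ and the lower bound becomes $\widetilde{\Omega}(\log n)$) is precisely the intended derivation.
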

While \Cref{cor:query separation cor 1} shows that testing in the $\alpha$-offline-erasure model can be harder than testing in the $\rate$-online-model for large $\rate$ and small $\alpha$, the difference in query complexity is quite mild. To obtain a 
large gap in query complexity from \Cref{thm: query sep}, we fix $\ell\in\N$ and constant proximity parameter $\eps\le\eps_1(\ell)$, and $r=r(\eps,\ell)$ such that $\rate=1$. This yields the following corollary.

\begin{corollary}
\label{cor:query separation cor 2}
Fix $\ell\in \N$ and a constant $\eps=\eps(\ell)$. There exists a property $\cP$ such that $\cP$ is $\eps$-testable in the $1$-online-corruption model using constantly many queries. But, for all $\alpha=\Omega\left(\frac{1}{\log^{(\ell)} \plen}\right)$ such that $\alpha+\eps<1$, every $\alpha$-offline-erasure-resilient $\eps$-tester requires $\Omega(\plen/\polylog^{(\ell)} \plen)$ queries.    
\end{corollary}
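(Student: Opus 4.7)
The plan is to derive the corollary as a direct specialization of Theorem \ref{thm: query sep}, choosing $r = r(n)$ so that the per-query online corruption budget $\rate$ in Part 1 becomes exactly~$1$. Writing the implicit constant hidden in the $O(\ell)$ exponent of Part 1 as $c$, I would set $r := \lceil (2/\eps)^{c\ell}\rceil$, which makes $\rate = (\eps/2)^{c\ell}\cdot r = 1$ (up to rounding, which is absorbed into the constants). Since the corollary fixes both $\ell$ and $\eps = \eps(\ell)$ to be constants, $r$ is a constant independent of $n$.

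With this choice of $r$, Part 1 of Theorem \ref{thm: query sep} immediately yields a $1$-online-corruption-resilient $\eps$-tester for $\cP$ of query complexity $(2/\eps)^{O(\ell)} = O(1)$, giving the upper bound claimed in the corollary.

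For the lower bound I further restrict to $\eps \le \eps_1(\ell)$ so that the hypotheses of Part 2 are met. Because $r$ is a constant, a routine asymptotic check shows that
\[
\log^{(\ell)}(n/r) = \Theta\bigl(\log^{(\ell)} n\bigr) \qquad\text{and}\qquad \polylog^{(\ell)}(n/r) = \Theta\bigl(\polylog^{(\ell)} n\bigr)
\]
for all sufficiently large $n$: for $\ell = 1$ this amounts to $\log n - \log r = \log n - O(1)$, and for $\ell \ge 2$ the extra outer iterated logarithms convert the additive constant into a vanishing lower-order term. Plugging these estimates into Part 2 yields $\alpha = \Omega(1/\log^{(\ell)} n)$ and a query lower bound of $\Omega(n/\polylog^{(\ell)} n)$, exactly matching the corollary.

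The only nontrivial step is the asymptotic comparison $\log^{(\ell)}(n/r) \asymp \log^{(\ell)} n$ for constant $r$; this is not a genuine obstacle, but it is the one place where one must confirm that specializing $r$ to a constant does not degrade the bounds stated in Theorem \ref{thm: query sep}.
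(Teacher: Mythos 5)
Your proposal is correct and matches the paper's derivation: the paper likewise obtains \Cref{cor:query separation cor 2} by fixing $\ell$ and a constant $\eps\le\eps_1(\ell)$ and choosing the constant repetition parameter $r=r(\eps,\ell)$ so that $\rate=1$ in \Cref{thm: query sep}. Your extra check that $\log^{(\ell)}(n/r)=\Theta(\log^{(\ell)}n)$ for constant $r$ is a detail the paper leaves implicit, but it is the right (and only) thing to verify.
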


One feature of the results in \Cref{thm: query sep} (as well as \Cref{cor:query separation cor 1,cor:query separation cor 2})  is that they hold even when the adversary in the online model is allowed to make corruptions, while the adversary in the offline model is restricted to erasures. 

To prove \Cref{thm: query sep}, we introduce the following ``lifting" result (\Cref{lem: lifting}). Informally, the lemma states that every property $\cP$ that is testable in the standard model has an encoding $\cP'$ that is testable with the same query complexity even in the presence of an online-corruption adversary. This result allows us to transfer existing separations between the standard model and the offline-erasure model to a separation between the online-corruption model and the offline-erasure model. 

To ``lift'' a property $\cP$, we simply encoded it with a repetition code, that is, repeat the corresponding input string.
\begin{definition}[$r$-concatenated string $x^r$ and property $\cP^r$]
\label{def:r-string}
    For all $r\in\N$ and strings $x$, let $x^r$ denote the concatenation of $r$ copies of $x$, written $x^r[1]\circ x^r[2]\circ\dots\circ x^r[r]$. Additionally, for a property $\cP$ of strings, let $\cP^r$ denote the property $\{x^r : x\in \cP\}$. 
\end{definition}

Our lifting lemma holds for properties of strings over any alphabet.

\begin{lemma}[Lifting lemma]
    \label{lem: lifting}
    Let $\Sigma$ be an alphabet and $\delta\in(0,1)$ be a sufficiently small constant. Let $\cP\subseteq\Sigma^*$ be a property of strings
    that is $\eps$-testable in the standard model using $q(m,\eps)$ queries on inputs of length m, 
    where $q(m,\eps)=\Omega\big(\frac1\eps\big)$.
    Then, for all $r\in\N$ and $\rate\leq\frac{\delta\cdot r}{[q(m,\frac \eps 2)\log q(m,\frac \eps 2)]^2}$, the property $\cP^r$ is $\rate$-online-corruption-resiliently $\eps$-testable using $\widetilde{O}(q(m,\frac \eps 2))$ queries on inputs of length $n=m\cdot r$.
\end{lemma}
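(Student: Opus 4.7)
\textbf{The plan} is to construct a two-phase online-corruption-resilient tester for $\cP^r$ that combines a \emph{consistency spot-check} with a \emph{random-block} simulation of the given standard tester $\cT$ for $\cP$. Viewing the input as $x = x^{(1)} \circ \cdots \circ x^{(r)}$ with $x^{(k)} \in \Sigma^m$, Phase~1 performs $s_1 = O(1/\eps)$ spot-check rounds, each sampling a random column $j \in [m]$ and two distinct blocks $k,k' \in [r]$ and rejecting on disagreement; Phase~2 runs $\cT$ with proximity parameter $\eps/2$ and, for each query $j$ made by $\cT$, samples a fresh random block $k \in [r]$ and returns the value at position $(k-1)m+j$. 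The overall output accepts iff neither phase rejects (with Phase~2 repeated a constant number of times and a majority vote taken for amplification if needed). The query complexity is $O(q(m,\eps/2))$, since $q(m,\eps/2) = \Omega(1/\eps)$.

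\textbf{Completeness.} Because every query picks a uniformly random block, the probability that an individual query lands on a position already corrupted by the adversary is at most (corruptions so far)$/r$. Summed over all $O(q(m,\eps/2))$ queries, each of which generates at most $\rate$ new corruptions, the total probability of \emph{any} incorrect answer is $O(q(m,\eps/2)^2\,\rate / r) = O(\delta/\log^2 q(m,\eps/2))$ by the hypothesis on $\rate$. Conditioned on this event, if $x = y^r \in \cP^r$, Phase~1 passes deterministically and $\cT$ sees the true values of $y$ at the queried positions, so it accepts with probability at least $2/3$.

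\textbf{Soundness.} Fix $x$ that is $\eps$-far from $\cP^r$; let $\mu_j$ be the empirical distribution of column $j$, let $\bar y_j$ be its plurality value, and let $\text{incons}(x) := \Pr_{j, k \neq k'}[x^{(k)}_j \neq x^{(k')}_j]$. The elementary estimate $\sum_c \mu_j(c)^2 \leq \mu_j(\bar y_j)$ applied columnwise gives $\text{incons}(x) \geq \delta_H(x, \bar y^r)$. \emph{Case~A} ($\text{incons}(x) \geq \eps/100$): each spot-check round rejects with probability at least $\text{incons}(x) - O(q(m,\eps/2)\rate/r) \geq \eps/200$, because the adversary can mask a true disagreement only by corrupting the second query's block, which is random from the adversary's viewpoint at the moment of the first query. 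With $s_1 = O(1/\eps)$ rounds, Phase~1 rejects with probability $\geq 2/3$. \emph{Case~B} ($\text{incons}(x) < \eps/100$): then $\delta_H(x, \bar y^r) < \eps/100$, so $\eps$-farness of $x$ from $\cP^r$ forces $\delta_H(\bar y, \cP) > 99\eps/100$. Assuming $\cT$ is non-repeating (WLOG by caching queries), Phase~2's answers are distributionally identical to running $\cT$ on a fixed random string $\hat z$ with independent coordinates $\hat z_j \sim \mu_j$. Markov's inequality (using $\mathbb{E}[\delta_H(\hat z, \bar y)] = \delta_H(x, \bar y^r) < \eps/100$) and the triangle inequality give $\delta_H(\hat z, \cP) > \eps/2$ with probability at least $9/10$, so $\cT$ rejects $\hat z$ with probability $\geq 3/5$, which constant-factor amplification of Phase~2 boosts to $\geq 2/3$.

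\textbf{Main obstacle.} The conceptual crux is Case~B: the knowledge that $\mathbb{E}[\delta_H(\hat z, y)] \geq \eps$ for every $y \in \cP$ does \emph{not} imply $\delta_H(\hat z, \cP) \geq \eps/2$ with high probability, so naively running $\cT$ on random-block answers (without the spot-check) can fail. For instance, taking $\cP = \{y \in \{0,1\}^m : |y| \geq m/2\}$ with an $x$ whose every column is balanced between $0$ and $1$, the random string $\hat z$ is uniformly distributed and hence close to $\cP$ with overwhelming probability, so $\cT$ accepts even though $x$ is $\tfrac12$-far from $\cP^r$. Phase~1 is precisely what rules out this failure mode: it certifies that a dominant plurality $\bar y$ exists, and then the independence of $\hat z$'s coordinates around $\bar y$ lets us transfer the farness of $\bar y$ from $\cP$ into $\hat z$, so that $\cT$'s standard-model soundness guarantee closes the argument.
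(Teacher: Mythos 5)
Your proof is correct, and its overall architecture (a repetition spot-check followed by a block-randomized simulation of $\cT$, plus a union bound over the event that some query lands on a corrupted position) matches the paper's. The simulation phase, however, is genuinely different. The paper answers each query $\mindex$ of $\cT$ by sampling $d=\log(c_1 q(m,\frac\eps2))$ random blocks and rejecting on any disagreement, which guarantees (via a $2^{-d+1}$ bound per query and a union bound over all queries) that every answer fed to $\cT$ equals the column plurality $\hat w_\mindex$ with high probability; $\cT$ is then effectively run on the fixed string $\hat w$, which is $\frac\eps2$-far from $\cP$. You instead take a single random block per query and argue, via deferred decisions, that the simulation is equivalent to running $\cT$ on a random string $\hat z$ with independent coordinates $\hat z_\mindex\sim\mu_\mindex$; Markov's inequality applied to $\E[\delta_H(\hat z,\bar y)]=\delta_H(x,\bar y^r)<\frac{\eps}{100}$ then shows $\hat z$ is itself $\frac\eps2$-far from $\cP$ with probability $\frac9{10}$. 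Your route saves the $\log q$ factor in query complexity (giving $O(q(m,\frac\eps2))$ rather than $\widetilde O(q(m,\frac\eps2))$) at the price of an extra constant-probability failure event, which you correctly note requires amplifying $\cT$ (or repeating Phase~2) to recover soundness $\frac23$; the paper's per-query plurality decoding avoids this and also preserves one-sided error of $\cT$ in the adversary-free setting, which your randomized $\hat z$ does not. Your ``main obstacle'' discussion correctly identifies why the spot-check phase is indispensable in your variant, and your coupling step legitimately requires the non-repeating/caching assumption you state. Minor loose ends (exact amplification constants, and the observation that the conditioning on ``no corrupted query'' should be handled additively rather than by conditioning the distribution of $\hat z$) are easily repaired.
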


Next we use \Cref{lem: lifting} to prove \Cref{thm: query sep}, deferring the proof of \Cref{lem: lifting}
to \Cref{sec: lifting proof}.
We leverage the following result of Ben-Eliezer, Fisher, Levi, and Rothblum \cite{BenFLR20}, which separates the offline-erasure-resilient model from the standard model. 
\begin{theorem}[\protect{\cite[Theorem 6.2]{BenFLR20}}]
    \label{thm: stanadard offline sep}
    For all constant $\ell\in\N$, there exist a property $\mathcal{Q}^{(\ell)}\subseteq \{0,1\}^*$
    and a constant $\eps_1=\eps_1(\ell)\in(0,1)$ such that the following hold:
    \begin{enumerate}
        \item For every $\eps\in(0,1)$, the property $\mathcal{Q}^{(\ell)}$ can be $\eps$-tested using $(\frac2\eps)^{O(\ell)}$ queries.
        \item For all $m\in\N$, $\eps\in (0,\eps_1),$ and $\alpha=\Omega(1/\log^{(\ell)}m)$ satisfying $\eps+\alpha<1$, every $\alpha$-erasure resilient $\eps$-tester for $\mathcal{Q}^{(\ell)}$ must make $\Omega(m/(10^\ell\cdot \polylog^{(\ell)}m))$ queries on inputs of length $m$.
    \end{enumerate}
\end{theorem}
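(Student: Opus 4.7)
The plan is to construct $\mathcal{Q}^{(\ell)}$ via an $\ell$-fold iteration of a Fischer-Fortnow style separation gadget, and to establish the standard-model upper bound and the erasure-resilient lower bound by induction on $\ell$. At the base level $\ell=1$, I would partition a string of length $m$ into a small ``index'' region of size $\Theta(m/\log m)$ and a large ``data'' region; membership in $\mathcal{Q}^{(1)}$ requires that the index encode a pointer to a sub-block of the data region whose contents form a codeword of a good binary error-correcting code. For general $\ell$, I would nest this construction: a string lies in $\mathcal{Q}^{(\ell)}$ if its top-level index points to a sub-block (on a length scale reduced by roughly a factor of $\log m$) that itself lies in $\mathcal{Q}^{(\ell-1)}$. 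Summing the fraction of ``critical'' indexing positions across all $\ell$ nested levels yields $O(1/\log^{(\ell)} m)$, precisely matching the erasure budget $\alpha$ in the theorem statement.

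For the standard-model upper bound, I would inductively build a tester $\cT_\ell$ from $\cT_{\ell-1}$: read the top-level index, locate the designated sub-block, and recursively invoke $\cT_{\ell-1}$ on it with proximity parameter $\eps/2$. A standard distance-propagation argument shows that the distance between the pointed-to sub-block and $\mathcal{Q}^{(\ell-1)}$ inherits a suitable fraction of the distance between the full input and $\mathcal{Q}^{(\ell)}$, justifying the recursion $q_\ell(m,\eps) = O(1/\eps) \cdot q_{\ell-1}(m',\eps/2)$, which solves to $(2/\eps)^{O(\ell)}$. The constant $\eps_1(\ell)$ is chosen small enough to absorb the accumulated slack across the $\ell$ recursive levels and to keep the recursive base case meaningful.

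For the lower bound, I would apply Yao's minimax principle with two distributions $\mathcal{D}_{\text{YES}}$ and $\mathcal{D}_{\text{NO}}$ over $\alpha$-erased inputs, both erasing every critical indexing position at every level of the recursion. $\mathcal{D}_{\text{YES}}$ samples uniformly from $\alpha$-erased inputs whose completions lie in $\mathcal{Q}^{(\ell)}$, while $\mathcal{D}_{\text{NO}}$ samples inputs whose every completion is $\eps$-far from $\mathcal{Q}^{(\ell)}$. A nested hybrid argument would then show that any deterministic tester making fewer than $\Omega(m/(10^\ell \cdot \polylog^{(\ell)} m))$ queries obtains a transcript whose distribution is statistically close under both distributions.

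The main obstacle is this hybrid argument: at each of the $\ell$ levels, one must couple the YES and NO distributions so that conditioning on any short query transcript leaves the remaining (non-erased) positions nearly indistinguishable, and each level contributes a $\polylog^{(\ell-k)} m$ ``search cost'' for finding a random critical sub-block whose indexing data has been erased. Propagating this coupling cleanly through the $\ell$ nested levels, and verifying that $\alpha = \Omega(1/\log^{(\ell)} m)$ is exactly enough to conceal indexing positions at every level while still admitting a legitimate completion of every YES-instance, is the technical heart of the proof. The $10^\ell$ factor reflects a constant blow-up per recursion level in this analysis, and the tower of iterated logarithms arises from composing the single-level ``locate the pointer among $m/\log m$ candidates'' barrier $\ell$ times.
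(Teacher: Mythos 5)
This statement is not proven in the paper at all: it is quoted (as Theorem~\ref{thm: stanadard offline sep}) from Ben-Eliezer, Fischer, Levi, and Rothblum~\cite{BenFLR20} and used strictly as a black box, so there is no in-paper proof to compare against. Judged on its own merits, your sketch does not reconstruct the argument that actually yields these parameters. The construction in \cite{BenFLR20} is PCPP-based: the property consists of strings $w\circ \pi$ where $w$ lies in a code that is maximally hard to test (requiring $\Omega(m)$ queries in the standard model) and $\pi$ is a probabilistically checkable proof of proximity certifying this. The standard-model tester simply runs the constant-query PCPP verifier; the erasure-resilient lower bound then follows almost immediately because the adversary erases the entire proof part, reducing the task to testing the hard code without help. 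The technical heart is constructing PCPPs whose proof length is only an $O(1/\log^{(\ell)} m)$ fraction of the input, via $\ell$-fold proof composition --- which is exactly why $\alpha=\Omega(1/\log^{(\ell)}m)$ suffices and where the $10^{\ell}$ and $\polylog^{(\ell)}$ losses come from.

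Your pointer-and-sub-block proposal has two concrete gaps. First, the arithmetic does not close: if each level shrinks the active region by a factor of roughly $\log m$ and the index at each level occupies a $\Theta(1/\log m)$ fraction, the total critical fraction is $O(\ell/\log m)$ rather than $O(1/\log^{(\ell)}m)$, and --- more damagingly --- the innermost sub-block, where the hardness must reside, has length only $m/(\log m)^{\Theta(\ell)}$, so no argument of this shape can certify a lower bound as large as $\Omega\big(m/(10^{\ell}\polylog^{(\ell)}m)\big)$ for $\ell\ge 2$. Second, Fischer--Fortnow-style pointer gadgets are precisely what Dixit et al.\ used to obtain the weaker $m^{\Omega(1)}$ separation; the jump to a near-linear lower bound in \cite{BenFLR20} is not achieved by nesting pointers but by the short-PCPP machinery, for which your proposal offers no substitute. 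The Yao-style hybrid coupling you describe is also not needed in the actual proof: once the proof part is erased, the lower bound is inherited directly from the hardness of the underlying code.
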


\begin{proof}[Proof of \Cref{thm: query sep}]
    Fix $\ell\in\N$ and let $m,r\in\N$ be sufficiently large. Let $\cP$ denote the property $\mathcal{Q}^{(\ell)}$ of strings of length $m$ given by \Cref{thm: stanadard offline sep}. By \Cref{thm: stanadard offline sep}, the property $\cP$ can be $\eps$-tested using $\left(\frac2\eps\right)^{O(\ell)}$ queries. Thus, \Cref{lem: lifting} guarantees that for all $\rate=(\frac \eps 2)^{O(\ell)}\cdot r$, the property $\cP^r$ is $\rate$-online-corruption-resiliently $\eps$-testable using $\left(\frac2\eps\right)^{O(\ell)}$ queries on inputs of length $n=mr$. 
    
    Now we  prove that $\cP^r$ requires $\Omega\left(\frac{n}{r\polylog^{(\ell)}(n/r)}\right)$ queries to $\eps$-test in the $\alpha$-offline-erasure-resilient model. To do that, we show how to use any $\alpha$-offline-erasure-resilient $\eps$-tester $\cT_r$ for $\cP^r$ to construct an $\alpha$-offline-erasure-resilient $\eps$-tester $\cT$ for $\cP$ with the same query complexity as $\cT_r$. Given a string $x^r$, let $x^r[\rindex]_\mindex$ denote index $\mindex$ of the $\rindex$-th copy of $x$. Observe that for every $\alpha$-erased input $x$, the string $x^r$ is also $\alpha$-erased and the distance of $x$ from $\cP$ is the same as the distance of $x^r$ from $\cP^r$; moreover, for all $\rindex\in[r]$ and $\mindex\in[m]$, we have $x^r[\rindex]_\mindex=x_\mindex$. Thus, the tester $\cT,$ given query access to $x$, can simulate $\cT_r$ executed with query access to $x^r$ and output the result given by $\cT_r$. By \Cref{thm: stanadard offline sep}, the tester $\cT_r$ has query complexity $\Omega\paren{\frac{m}{\polylog^{(\ell)} m}}=\Omega\left(\frac{n}{r\polylog^{(\ell)} (n/r)}\right)$ whenever $\alpha=\Omega\left(\frac1{\log^{(\ell)}(n/r)}\right)$. 
\end{proof}

\subsection{Proof of the lifting lemma (\Cref{lem: lifting})}
\label{sec: lifting proof}

    Recall that \Cref{lem: lifting} states that if a property $\cP$ is testable in the standard model, then the property $\cP^r$ is testable in the online-corruption model. Let $\cT$ be an $\eps$-tester for $\cP$ that has failure probability $\frac{1}{12}$, and query complexity $c_0\cdot q(m,\eps)$ for some constant $c_0>0$ (we can obtain such a tester via standard amplification techniques).  We construct a tester $\cT'$ for $\cP^r$ (\Cref{alg: online test}) and show that it is $\rate$-online-corruption-resilient. The tester $\cT'$  consists of two phases: first, $\cT'(s)$ calls \reptest (\Cref{alg: rep-test}) to test that $s$ is a repetition of some string, $w$, and second, $\cT'(s)$ simulates $\cT$ with query access to $w$. 
    
    For a string $s$ of length $n=r\cdot m$ composed of $r$ substrings each of length $m$, we use the notation $s[\rindex]_\mindex$ to denote index $\mindex$, in the $\rindex$-th substring of $s$.
   
    \begin{algorithm}[H]
\textbf{Parameters:} length parameter $m$, repetition parameter $r$, proximity parameter $\eps\in(0,1)$\\
    \textbf{Input:} query access to string  $s=s[1]\circ\dots\circ s[r]$ such that $|s[\rindex]|=m$ for each $\rindex\in[r]$\\
    \textbf{Subroutines:} \reptest (\Cref{alg: rep-test}) and $c_0\cdot q(m,\eps)$-query tester $\cT$ for $\cP$
	\begin{algorithmic}[1]
		\caption{\label{alg: online test} $\rate$-online-corruption-resilient $\eps$-tester $\cT'$}
        \State $c_1\gets 24\cdot c_0$
        \If{$\reptest(s,\frac \eps 2)$ rejects} \textbf{reject}
        \EndIf  
        \State\label{step:simulation} simulate $\cT$ with proximity parameter $\frac \eps 2$, answering each query $\mindex$ made by $\cT$ as follows:
        
        sample a set of $d=\log (c_1\cdot q(m,\frac \eps 2))$ uniform indices $\rindex_1, \dots, \rindex_d\in [r]$
        
        \textbf{if} there exists a pair of indices $k, k'\in [d]$ such that $s[\rindex_k]_\mindex\neq s[\rindex_{k'}]_\mindex$ \textbf{then} \textbf{reject}
        
        \textbf{else} provide the answer $s[\rindex_1]_\mindex$ to $\cT$
        \If{the simulation of $\cT$ rejects} \textbf{reject}
        \Else\textbf{ accept}
        \EndIf
	\end{algorithmic}
\end{algorithm}

\begin{algorithm}[H]
\textbf{Parameters:} length parameter $m$, repetition parameter $r$, proximity parameter $\eps\in(0,1)$\\
    \textbf{Input:} query access to string $s=s[1]\circ\dots\circ s[r]$ such that $|s[\rindex]|=m$ for each $\rindex\in[r]$
	\begin{algorithmic}[1]
		\caption{\label{alg: rep-test} $\reptest$} \State \textbf{repeat} $\frac 2\eps$ 
        times: 
        
            sample $\rindex_1,\rindex_2\sim[r]$ and $\mindex\sim[m]$ uniformly and independently at random
        \If{$s[\rindex_1]_\mindex\neq s[\rindex_2]_\mindex$} \textbf{reject}\EndIf
        \State\textbf{else accept}
	\end{algorithmic}
\end{algorithm}

We start by analyzing \reptest (\Cref{alg: rep-test}).

    \begin{claim}
    \label{claim:reptest}
    For all alphabets $\Sigma$ and parameters $\eps\in(0,1)$ and $r,m\in\N$, $\reptest$ is a one-sided error $\eps$-tester for the repetition code defined as  $\cC_{m,r}=\{s^{r}: s\in\Sigma^m\}$.
    It makes $O(\frac 1\eps)$ queries and has error probability at most $\frac{1}{6}$. 
    \end{claim}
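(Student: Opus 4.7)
The plan is to verify the three claims—query complexity, one-sided error, and the rejection guarantee when $s$ is $\eps$-far from $\cC_{m,r}$—in that order. The first two are immediate: each of the $2/\eps$ iterations makes exactly two oracle queries, for a total of $4/\eps = O(1/\eps)$ queries; and if $s = w^r$ for some $w \in \Sigma^m$, then $s[\rindex_1]_\mindex = w_\mindex = s[\rindex_2]_\mindex$ for every choice of $\rindex_1, \rindex_2, \mindex$, so the tester always accepts.

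The main work is to lower bound the per-iteration rejection probability when $\delta_H(s, \cC_{m,r}) \geq \eps$. For each column $\mindex \in [m]$, I would let $q_{\mindex,\sigma} = \Pr_{\rindex \sim [r]}[s[\rindex]_\mindex = \sigma]$ for $\sigma \in \Sigma$, and define $\mu_\mindex = 1 - \max_{\sigma} q_{\mindex,\sigma}$. The closest repeated string $w^r$ to $s$ is obtained by choosing each $w_\mindex$ to be a plurality symbol of column $\mindex$, so
\[
\delta_H(s, \cC_{m,r}) \;=\; \frac{1}{m} \sum_{\mindex=1}^m \mu_\mindex.
\]
The probability that a single iteration rejects, conditioned on sampling position $\mindex$, is
\[
p_\mindex \;=\; \Pr_{\rindex_1, \rindex_2 \sim [r]}\bigl[s[\rindex_1]_\mindex \neq s[\rindex_2]_\mindex\bigr] \;=\; 1 - \sum_{\sigma} q_{\mindex,\sigma}^2.
\]

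The key inequality I would establish is $p_\mindex \geq \mu_\mindex$, which follows from
\[
\sum_{\sigma} q_{\mindex,\sigma}^2 \;\leq\; \Bigl(\max_\sigma q_{\mindex,\sigma}\Bigr) \sum_{\sigma} q_{\mindex,\sigma} \;=\; 1 - \mu_\mindex.
\]
Averaging over $\mindex$ sampled uniformly from $[m]$, the probability that a single iteration rejects is at least $\frac{1}{m}\sum_\mindex \mu_\mindex = \delta_H(s, \cC_{m,r}) \geq \eps$. Since the $2/\eps$ iterations are independent, the probability that none of them rejects is at most $(1-\eps)^{2/\eps} \leq e^{-2} < \frac{1}{6}$, as required.

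I do not expect any of these steps to be difficult; the only nontrivial observation is the identification of the plurality completion as the nearest codeword and the resulting $p_\mindex \geq \mu_\mindex$ bound, which is a short moment-style argument. Everything else is bookkeeping over uniform samples and the standard Bernoulli tail estimate.
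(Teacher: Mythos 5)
Your proof is correct and follows essentially the same route as the paper's: identify the plurality string as the nearest codeword, bound the per-iteration collision probability by $\sum_\sigma q_{\mindex,\sigma}^2 \leq \max_\sigma q_{\mindex,\sigma}$, and finish with $(1-\eps)^{2/\eps}\leq e^{-2}<\frac16$. The only cosmetic difference is that you assert the plurality completion is exactly the nearest codeword, whereas the paper only needs (and uses) the one-sided bound $\delta_H(s,\hat{w}^r)\geq\eps$.
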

    \begin{proof}
Let $s=s[1]\circ\dots \circ s[r],$ where $s[\rindex]=s[\rindex]_1\dots s[\rindex]_m$ for all $\rindex\in[r]$, be the input string over the alphabet $\Sigma$.
     If $s\in\cC_{m,r}$, then \reptest always accepts $s$.   
        
        Suppose that string $s$ is $\eps$-far from $\cC_{m,r}$.
        Let $\pl_{\rindex\in[r]}(a_\rindex)$ denote a function that takes inputs $a_1,\dots,a_r\in \Sigma$ and outputs the most frequent among them, resolving ties arbitrarily.
        Let $\hat{w}$ be the string of length $m$ defined by $\hat{w}_\mindex=\pl_{\rindex\in [r]} s[\rindex]_\mindex$ for all $\mindex\in[m]$. Since $s$ is $\eps$-far from $\cC_{m,r}$, we can bound the distance between the input string and $\hat{w}$ repeated $r$ times:
        \begin{equation}\label{eq:distance-from-plurality}
           \delta_H(s,\hat{w}^r)\geq \eps. 
        \end{equation}

For all $\mindex\in[m]$ and $a\in\Sigma$, let $p_\mindex(a)$ denote $\Pr_{\rindex\in[r]}[s[\rindex]_\mindex=a]$, the probability that a randomly chosen repetition of the $\mindex$-th character in $s$ is equal to $a$. The probability that one iteration of the loop in \reptest accepts $s$ is
\begin{align*}
    \Pr_{\substack{ \rindex_1,\rindex_2\in[r] \\ \mindex\in[m] }}\big[s[\rindex_1]_\mindex
    =s[\rindex_2]_\mindex\big]
    &= \frac 1 m \sum_{\mindex\in[m]}\sum_{a\in\Sigma} [p_\mindex(a)]^2 & \text{by law of total probability}\\
    &\leq  \frac 1 m \sum_{\mindex\in[m]} p_\mindex(\hat{w}_\mindex)\sum_{a\in\Sigma} p_\mindex(a) & \text{\ \ \ \ since $p_\mindex(a)\leq p_\mindex(\hat{w}_\mindex)\ \forall \mindex\in[m],a\in\Sigma$} \\
    &=  \frac 1 m \sum_{\mindex\in[m]} p_\mindex(\hat{w}_\mindex) & \text{since $\sum_{a\in\Sigma} p_\mindex(a)=1 \ \forall \mindex\in[m]$}\\
    &= \Pr_{\substack{\rindex\in[r] \\ \mindex\in[m]}}[s[\rindex]_\mindex=\hat{w}_\mindex]& \text{by law of total probability}\\
    &=1-\delta_H(s,\hat{w}^r) &\text{by definition of $\delta_H$}\\
    &\leq 1-\eps&\text{by \eqref{eq:distance-from-plurality}.}
\end{align*}
 Hence, each iteration of the loop in \reptest rejects with probability at least $\eps$. Therefore, $\reptest$ accepts $s$ with probability at most $\left(1-\eps\right)^{2/\eps}\leq e^{-\eps\cdot 2/\eps }\leq \frac{1}{6}$. 
  \end{proof}

    Next, we analyze the tester $\cT'$.

    \begin{claim}[$\cT'$ is a tester in the standard model] \label{claim: tester with no adversary}
    For all $\eps\in(0,1)$,$r\in\N$, alphabets $\Sigma$, and properties $\cP\in\Sigma^*$, \Cref{alg: online test} is an  $\eps$-tester for $\cP^r$ in the standard model (without any adversary). Moreover, it accepts $\eps$-far inputs with probability at most $\frac{1}{6}$, and has one-sided error whenever $\cT$ has one-sided error.  
    \end{claim}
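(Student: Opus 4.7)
}
The plan is to handle completeness and soundness separately, using \Cref{claim:reptest} for the reptest phase and a plurality argument for the simulation phase that mirrors the one in the proof of \Cref{claim:reptest}.

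Completeness and one-sided error are direct. If $s=w^r$ for some $w\in\cP$, then $s\in\cC_{m,r}$, so \reptest accepts (with probability $1$, since its error is one-sided). At every simulated query $\mindex$, the $d$ sampled repetitions all return $w_\mindex$, so no disagreement is detected and $\cT$ is simulated faithfully on input $w\in\cP$. Hence $\cT$ accepts with probability at least $11/12\geq 2/3$, and if $\cT$ has one-sided error then $\cT'$ accepts every $s\in\cP^r$ with probability $1$.

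For soundness, suppose $s$ is $\eps$-far from $\cP^r$, and split on the distance from $\cC_{m,r}$. If $s$ is $\eps/2$-far from $\cC_{m,r}$, then \Cref{claim:reptest} (invoked with proximity parameter $\eps/2$) gives that $\reptest$, and therefore $\cT'$, accepts with probability at most $1/6$. Otherwise, let $\hat{w}$ be the plurality string $\hat{w}_\mindex=\pl_{\rindex\in[r]}s[\rindex]_\mindex$ from the proof of \Cref{claim:reptest}. Then $\delta_H(s,\hat{w}^r)\leq\eps/2$, so by the triangle inequality and $\delta_H(s,\cP^r)\geq\eps$ I obtain $\delta_H(\hat{w},\cP)\geq\eps/2$; i.e., $\hat{w}$ is $\eps/2$-far from $\cP$.

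The key step is to argue that in the simulation phase, $\cT'$ either rejects or runs $\cT$ faithfully on $\hat{w}$. For each query $\mindex$ issued by $\cT$, let $B_\mindex$ be the event that the $d$ samples $s[\rindex_1]_\mindex,\dots,s[\rindex_d]_\mindex$ all agree at a common value $a\neq\hat{w}_\mindex$, and write $q_\mindex(a):=\Pr_{\rindex\sim[r]}[s[\rindex]_\mindex=a]$. Since $\hat{w}_\mindex$ is a plurality, $q_\mindex(a)\leq 1/2$ for every $a\neq\hat{w}_\mindex$, which gives
\[
\Pr[B_\mindex]\leq\sum_{a\neq\hat{w}_\mindex}q_\mindex(a)^d\leq \left(\tfrac{1}{2}\right)^{d-1}\sum_{a\neq\hat{w}_\mindex}q_\mindex(a)\leq\left(\tfrac{1}{2}\right)^{d-1}=\frac{2}{c_1\cdot q(m,\eps/2)}.
\]
Since $\cT$ makes at most $c_0\cdot q(m,\eps/2)$ queries, a union bound yields $\Pr[\bigcup_\mindex B_\mindex]\leq 2c_0/c_1=1/12$. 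Conditional on no $B_\mindex$ occurring, each simulated query either triggers a disagreement (and $\cT'$ rejects) or returns $\hat{w}_\mindex$, so if $\cT'$ does not reject during the simulation it executes $\cT$ on input $\hat{w}$; as $\hat{w}$ is $\eps/2$-far from $\cP$, $\cT$ then rejects with probability at least $11/12$. Summing, $\cT'$ accepts with probability at most $1/12+1/12=1/6$, as claimed. The main subtlety is the adaptivity of $\cT$: the per-step bound on $\Pr[B_\mindex]$ depends only on the coordinate $\mindex$, which is determined by the transcript up to that query, so the same $(1/2)^{d-1}$ bound holds conditionally on the past and the union bound is unaffected by adaptivity.
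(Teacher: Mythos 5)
Your proof is correct and follows essentially the same route as the paper's: split on whether $s$ is $\eps/2$-far from the repetition code $\cC_{m,r}$, invoke \Cref{claim:reptest} in the first case, and in the second case use the triangle inequality to show the plurality string $\hat w$ is $\eps/2$-far from $\cP$, then bound by $2^{-(d-1)}$ per query the probability that the simulation feeds $\cT$ an answer inconsistent with $\hat w$ and union-bound over the two failure events. Your direct bound $\sum_{a\neq\hat w_\mindex}q_\mindex(a)^d\leq 2^{-(d-1)}$ is just a slight repackaging of the paper's conditioning argument, and your remark on adaptivity is a nice (correct) clarification of a point the paper leaves implicit.
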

    \begin{proof}
        If the input $s$ is in $\cP^r$ then $s=w^r$ for some string $w\in\cP$. In this case, \reptest accepts with probability $1$, and $\cT'$ accepts with the same probability as $\cT$. 
  Next, suppose $s$ is $\eps$-far from $\cP^r$. Consider the string $\hat{w}\in \Sigma^m$, where $\hat{w}_\mindex=\pl_{\rindex\in [r]} s[\rindex]_\mindex$ for all $j\in[m]$. Recall that the repetition code is defined as $\cC_{m,r}=\{s^{r}: s\in\Sigma^m\}$. Observe that the distance from $s$ to $\cC_{m,r}$ is equal to $\delta_H(s,\hat{w}^r)$. If this distance is at least $\frac\eps 2$ then, by \Cref{claim:reptest}, \reptest accepts with probability at most $\frac{1}{6}$, completing the proof.
        
        Now assume $\delta_H(s,\hat{w}^r)<\frac \eps 2.$ We will show that, in this case, the simulation of $\cT$ in \Cref{alg: online test} accepts with probability at most $\frac 1 6$. Specifically, we consider two failure events: let $E_1$ be the event that the simulation of $\cT$ feeds $\cT$ at least one query answer inconsistent with $\hat w$, and $E_2$ be the event that the simulation of $\cT$ accepts string $\hat w$. We will show that $\Pr[E_1]\leq \frac 1 {12}$ and $\Pr[E_2]\leq \frac 1 {12}$. Then a union bound over $E_1$ and $E_2$ completes the proof.
        
        Since $s$ is $\eps$-far from $\cP^r$ and $\delta_H(s,\hat{w}^r)<\frac \eps 2$, we get
        \begin{align*}
        \eps
        \leq\delta_H(s,\cP^r)
        \leq\delta_H(s,\hat{w}^r)+\delta_H(\hat{w}^r,\cP^r) 
        \leq\frac \eps 2+\delta_H(\hat{w},\cP),
        \end{align*}
        implying that $\delta_H(\hat{w},\cP)\geq\frac \eps 2$.
       Since $\cT$ is a tester for $\cP$ with amplified success probability, and $\hat{w}$ is $\frac\eps 2$-far from $\cP$, algorithm $\cT$ accepts with probability at most $\frac {1}{12}$ when run on input $\hat{w}$ and with proximity parameter $\frac\eps 2$; that is, $\Pr[E_2]\leq \frac 1 {12}$.
              
       Next we analyze $\Pr[E_1]$.
       Fix $\mindex\in [m]$, and let $a_\mindex$ denote the answer provided by $\cT'$ to query $\mindex$ in the simulation. If $\cT'$ does not reject while simulating this query, then
        \begin{align}
            \Pr[a_\mindex\neq \hat w_\mindex]&= \Pr_{\rindex_1,\dots, \rindex_d\in[r]}[s[\rindex_k]_\mindex=s[\rindex_{k'}]_\mindex \wedge s[\rindex_k]_\mindex\neq \hat w_\mindex \ \forall k,k'\in[d]]\nonumber\\
            &= \Pr_{\rindex_1,\dots,\rindex_d\in[r]}[s[\rindex_k]_\mindex=s[\rindex_1]_\mindex \ \forall k\in\{2,\dots,d\}\mid  s[\rindex_1]_\mindex\neq \hat w_\mindex]\cdot\Pr_{i_1\in[r]}[s[\rindex_1]_\mindex\neq \hat w_\mindex]\nonumber\\
            &\leq 2^{-d+1}\label{eq:query-simulation-bound}\\
            &=2(c_1\cdot q(m,\eps/2))^{-1}\nonumber. 
        \end{align}       
        The inequality in \eqref{eq:query-simulation-bound} follows from the following simple observation: once $s[\rindex_1]_\mindex$ has been sampled, if it is not the plurality, then for each $k\in\{2,\dots, d\}$, every subsequent $s[\rindex_k]_\mindex$ is equal to $s[\rindex_1]_\mindex$ with probability at most $\frac 12$. By our choice of $c_1$, and a union bound over all the $c_0\cdot q(m,\eps/2)$ queries made by $\cT$, the probability of $E_1$ is at most $\frac 1 {12}$.
        
        Since $\cT'$ can accept only if $E_1\cup E_2$ occurs, a union bound over $E_1$ and $E_2$ implies that $\cT'$ is indeed a tester for $\cP^r$, and has error probability at most $\frac{1}{6}$ when no adversary is present. 
    \end{proof}

Finally, we prove the lifting lemma.  
\begin{proof}[Proof of \Cref{lem: lifting}]
First, we argue that the probability $\cT'$ (\Cref{alg: online test})  queries a corrupted index is small. Recall that $n=m\cdot r$ is the length of the input to $\cT'$ and that $c_0\cdot q(m,\eps)$ is the query complexity of $\cT$ with proximity parameter $\eps$. Let $q'=q'(n,\eps)=\frac 8\eps+c_0\cdot q(m,\frac \eps 2)\log(c_1\cdot q(m,\frac \eps 2))$ be the number of queries made by $\cT'$. Then, there are at most $q'\rate$ corruptions at any point during the execution. Moreover, each query made by $\cT'$ is an index that is uniform over the set of corresponding indices in the $r$ different segments of $s$. It follows that each query made by $\cT'$ is corrupted with probability at most $\frac{q'\rate}{r}$. By a union bound over the $q'$ queries, 
    \begin{equation}
    \label{eq: corrupt query}
    \Pr[\text{$\cT'$ queries a corrupted index} ]\leq \frac{(q')^2\rate}{r}\leq\frac{1}{6}.
    \end{equation} 
    The upper bound follows from the hypothesis that $\rate\leq\frac{\delta\cdot r}{[q(m,\eps/2)\log q(m,\eps/2)]^2}$, that $q(m,\eps)=\Omega\big(\frac1\eps\big)$, and that $\delta$ is a sufficiently small constant. By a union bound, the probability that $\cT'$ errs in the presence of the adversary is at most the probability $\cT'$ errs when no adversary is present plus the probability $\cT'$ queries a corrupted index. By \Cref{claim: tester with no adversary} and \eqref{eq: corrupt query}, this probability is at most $\frac 1 3$, which completes the proof of \Cref{lem: lifting}. 
\end{proof}


\section{Randomness complexity separation}\label{sec:randomness-separation}

In this section, we prove our result separating the randomness complexity of testing in the online and offline models.   

\begin{theorem}[Randomness separation]
    \label{thm: rand sep}
    For all $\tau=\tau(n)$, there exist a property $\cP^{(\tau)}=\bigcup_{n\in\N} \cP^{(\tau)}_n$, where $\cP^{(\tau)}_n\subseteq [n]^n$, and an algorithm $\cT$ that takes $\eps\in(0,1)$ as an input and, for all $\eps\in(0,1)$,  is a one-sided error $\eps$-tester for $\cP^{(\tau)}$ that make $O(\frac \tau \eps)$ queries. Additionally, the following hold: 
    \begin{enumerate}
        \item\label{thm: rand sep item 1} For all $\eps,\alpha\in(0,1)$, tester $\cT$ is an $\alpha$-offline-erasure-resilient $\eps$-tester for $\cP^{(\tau)}$. Moreover, $\cT$ can be simulated using $O(\log \plen)$ random bits and no additional queries.
        \item\label{thm: rand sep item 2} For all $\eps\in(0,1)$ and $\rate\leq\left(\frac{0.01\eps\sqrt{\plen}}{\tau}\right)^2,$ tester $\cT$ is a $\rate$-online-erasure-resilient $\eps$-tester for $\cP^{(\tau)}$. 
        \item\label{thm: rand sep item 3} There exists a constant $\eps>0$ such that for all $\rate\in\N$ and $\tau\leq\frac{0.01\sqrt\plen}{\log\plen}$, every $\rate$-online-erasure-resilient $\eps$-tester for $\cP^{(\tau)}$ uses $\Omega\left(\frac{\tau\log(\rate+1)}{\log\tau}\right)$ random bits. Moreover, if the tester has one-sided error, then it uses $\Omega(\tau\log(\rate+1))$ random bits. 
    \end{enumerate}
\end{theorem}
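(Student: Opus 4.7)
My plan is to take $\cP^{(\tau)}_n$ to be the \emph{$\tau$-Distinct-Elements} property of strings in $[\plen]^\plen$ that use at most $\tau$ distinct characters, and to let $\cT$ be the canonical one-sided-error tester from the cited line of work \cite{GoldreichR23,AdarFL24,AdarF24,PintoH24}: sample $q=\Theta(\tau/\eps)$ uniform indices, query them, and reject iff the answers contain more than $\tau$ distinct non-$\perp$ values. Completeness is immediate: any completion lying in $\cP^{(\tau)}$ uses at most $\tau$ distinct symbols at non-erased positions, and the tester ignores $\perp$s. I implement the sample using a pairwise-independent family, so that the entire algorithm uses $O(\log \plen)$ random bits and issues no additional queries, which already yields the randomness upper bound required in Part~1.

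For Part~1, given an $\alpha$-erased input $x$ for which every completion is $\eps$-far from $\cP^{(\tau)}$, I argue that the substring $\tilde x$ formed by the non-erased characters is itself $\eps$-far from $\cP^{(\tau)}$. Indeed, for any $\tau$-element set $S\subseteq[\plen]$ the completion that fills every erasure with a fixed symbol of $S$ must still be $\eps$-far, so at least $\eps \plen$ non-erased positions of $x$ carry a value outside $S$; dividing by the $(1-\alpha)\plen$ non-erased positions shows that $\tilde x$ disagrees with every $\tau$-element target on at least an $\eps$-fraction of its coordinates. Uniformly sampling $[\plen]$ is, after discarding $\perp$ answers, equivalent to uniformly sampling non-erased positions, so the standard-model soundness of $\cT$ on $\tilde x$, as established in the cited literature, transfers verbatim.

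For Part~2, I run the same tester and show that with high probability no query hits an erased index. After the $(j-1)$st answered query, at most $(j-1)\rate$ positions carry $\perp$, so the probability that the $j$th uniform query lands on an erased index is at most $(j-1)\rate/\plen$; summing over the $q=O(\tau/\eps)$ queries bounds the failure probability by $q^2\rate/(2\plen)$, which under the hypothesis $\rate\le (0.01\eps\sqrt{\plen}/\tau)^2$ is a small absolute constant. Conditioned on none of the queries hitting a $\perp$, the adversary is effectively absent and the Part~1 analysis (with $\alpha=0$) applies.

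For Part~3, I apply the general randomness-to-query reduction \Cref{lem: rand-query reduction}, which converts any $r$-random-bit $\rate$-online-erasure-resilient $\eps$-tester for $\cP^{(\tau)}$ into a standard-model $\eps$-tester making $O(r/\log(\rate+1))$ queries and preserving one-sidedness. The hypothesis $\tau\le 0.01\sqrt{\plen}/\log \plen$ keeps the induced query count in the sublinear regime where the standard-model lower bounds for $\cP^{(\tau)}$ apply non-trivially. Combining with the known bounds from \cite{GoldreichR23,AdarFL24,AdarF24,PintoH24}---namely $\Omega(\tau/\log\tau)$ queries for two-sided $\eps$-testing at a suitable constant $\eps$, and $\Omega(\tau)$ queries for one-sided error testers---gives the claimed randomness lower bounds $\Omega(\tau\log(\rate+1)/\log\tau)$ and $\Omega(\tau\log(\rate+1))$. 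The main obstacle I anticipate is pinning down the exact quantitative form of the standard-model query-complexity lower bound cited in Part~3, since the various references state the bound in slightly different parameter regimes; a secondary technical point is verifying that a single implementation of $\cT$ (with its pairwise-independent sample) simultaneously witnesses the upper bounds in Parts~1 and~2 without any extra queries.
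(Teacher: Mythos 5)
Your choice of property ($\tau$-Distinct-Elements), tester (sample $O(\frac\tau\eps)$ uniform indices and reject iff more than $\tau$ distinct non-$\perp$ values appear), and the arguments for Parts 2 and 3 (union bound over the probability of hitting an erased index; \Cref{lem: rand-query reduction} combined with a standard-model query lower bound) all match the paper's proof. Two points diverge in ways that matter.

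First, the $O(\log \plen)$-randomness claim in Part 1. You propose implementing the sample with a pairwise-independent family, but the soundness analysis of the tester is a coupon-collector argument: while at most $\tau$ distinct values have been seen, the next \emph{independent} sample reveals a new value with probability at least $\eps$. This uses full independence, and it is not clear the argument survives pairwise independence --- the bad event ``the sample contains at most $\tau$ distinct values'' is a union over all size-$\tau$ value sets $S$ of the event that every sample lands in a position valued in $S$, and second-moment methods do not obviously control it. The paper sidesteps this entirely by invoking the generic Goldreich--Sheffet derandomization (\Cref{fact: GS10}): any randomized $q$-query oracle machine with error at most $\frac14$ can be simulated by one with error at most $\frac13$ using $\log\plen+\log\log|\Sigma|+O(1)$ random bits. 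This is exactly why the theorem is phrased as ``$\cT$ can be \emph{simulated}'' with $O(\log\plen)$ random bits. You should either supply an argument that pairwise independence preserves soundness, or switch to the black-box derandomization.

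Second, your closing worry resolves against you: a single $O(\log\plen)$-randomness implementation \emph{cannot} simultaneously witness Parts 1 and 2, since Part 3 shows every online-erasure-resilient tester needs $\omega(\log\plen)$ random bits once $\tau$ is large. The fully random tester is the object that is online-resilient; the low-randomness simulation is claimed only against the offline adversary. Keep these as two distinct algorithms. A further minor difference: in Part 3 you import the $\Omega(\frac\tau{\log\tau})$ and $\Omega(\tau)$ standard-model query lower bounds from the cited literature, whereas the paper derives them itself (\Cref{lemma:ds-lower-bound}) by embedding the Valiant--Valiant support-size hard distributions into strings and applying the Goldreich--Ron query-to-sample reduction for symmetric properties, precisely because the published bounds are not stated in the needed regime ($\tau\leq\frac{\sqrt\plen}{\log\plen}$, constant $\eps$); if you cite, you must verify that regime.
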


\Cref{thm: rand sep} yields meaningful separations for all settings of $\tau$ between $\Omega(\log \plen)$ and $O\left(\frac{\sqrt\plen}{\log \plen}\right)$. As an example, the following corollary is derived by setting $\tau=\frac{0.01\eps\sqrt{\plen}}{\log\plen}$ and $\rate=1$. 
\begin{corollary}
\label{cor:rand sep cor 1}
There exists a property that can be $\eps$-tested (for every constant $\eps$) using $O\big(\frac{\sqrt{n}}{\log n}\big)$ queries in the presence of either an online or 
offline adversary. The offline-erasure-resilient tester uses $O(\log n)$ random bits, but there exists a constant $\eps$ such that every $1$-online-erasure-resilient $\eps$-tester, requires $\Omega\big(\frac{\sqrt{n}}{\polylog n}\big)$ random bits to succeed with constant probability. Moreover, if the $1$-online-erasure-resilient tester has one-sided error then it requires $\Omega(\sqrt n)$ random bits.
\end{corollary}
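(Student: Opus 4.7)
The plan is to instantiate $\cP^{(\tau)}$ as the $\tau$-Distinct-Elements property on $[n]^n$, i.e., strings with at most $\tau$ distinct characters, and to let $\cT$ be the canonical tester that samples $q=\Theta(\tau/\eps)$ uniformly random indices and accepts iff at most $\tau$ distinct characters appear among them. One-sided correctness in the standard model follows from the cited literature on distinct-elements testing. For Item~1, I would first verify that $\cT$ remains one-sided in the offline-erasure model by ignoring erased values when counting distinct characters: if every completion of the input is $\eps$-far, then restricted to the $(1-\alpha)$-fraction of non-erased positions, for every $\tau$-subset $T$ of the alphabet the fraction of positions whose value is outside $T$ is at least $\eps/(1-\alpha)$, so a sample of $\Theta(\tau/\eps)$ uniform non-erased positions witnesses more than $\tau$ distinct characters with probability at least $2/3$. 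To drive the randomness down to $O(\log n)$, I would invoke the Goldreich--Sheffet derandomization theorem; since the offline-erasure model is static (the input does not react to queries), their result applies as in the standard model and preserves query complexity up to a constant factor.

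For Item~2, the same $\cT$ defeats the online adversary whenever $\rate\leq(0.01\eps\sqrt{n}/\tau)^2$. The key calculation is a simple union bound: after $i-1$ queries at most $(i-1)\rate$ positions have been erased, so the $i$-th uniformly random query hits an erased index with probability at most $(i-1)\rate/n$; summing over $i\in[q]$ gives that the probability that some query is erased is at most $q^2\rate/n$, which is a small constant under the assumed bound on $\rate$. Conditioned on the complementary event, $\cT$ runs exactly as in the standard model, so correctness transfers.

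For Item~3, I apply the general randomness-to-query reduction (\Cref{lem: rand-query reduction}). It converts any $\rate$-online-erasure-resilient $\eps$-tester using $r$ random bits into a standard-model $\eps$-tester that makes $O(r/\log(\rate+1))$ queries in the one-sided case, and $O(r\log\tau/\log(\rate+1))$ queries in the two-sided case. Combined with the known standard-model query lower bound of $\Omega(\tau)$ for $\tau$-Distinct-Elements from the literature cited earlier (valid for a suitable constant $\eps$ and $\tau\leq 0.01\sqrt{n}/\log n$), this immediately yields $r=\Omega(\tau\log(\rate+1))$ in the one-sided case and $r=\Omega(\tau\log(\rate+1)/\log\tau)$ in the two-sided case, matching the theorem's claims.

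The main obstacle is \Cref{lem: rand-query reduction} itself, which generalizes the intro-level \Cref{lem: rand-query reduction intro} to arbitrary $\rate$ and to the two-sided regime. The intuition behind it is that with few random bits the online tester's distribution over its next query conditioned on the transcript has small support, so a cleverly chosen online adversary can force the tester to query out of a small predictable set, which a standard-model tester can then simulate by directly querying that set. Carefully quantifying how much information the adversary extracts per erasure---and accounting for the extra $\log\tau$ factor needed in the absence of one-sided structure---is the delicate technical step, and everything else is a relatively mechanical plug-and-play of known bounds.
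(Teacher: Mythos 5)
Your route is the same as the paper's: the corollary is obtained by instantiating the general randomness-separation theorem with the $\tau$-Distinct-Elements property for $\tau=\Theta\big(\frac{\sqrt{n}}{\log n}\big)$ and $\rate=1$, using the sample-and-count tester (one-sided in both adversarial models, derandomized offline via Goldreich--Sheffet, and robust online by a union bound over erased queries), and deriving the randomness lower bound by composing the randomness-to-query reduction with a standard-model query lower bound. One piece of your Item~3 reasoning is garbled, though, even if the final bounds come out right. The reduction (\Cref{lem: rand-query reduction}) does \emph{not} lose a $\log\tau$ factor in the two-sided case: it is a property-independent statement that converts $r$ random bits into at most $\frac{r}{\log(\rate+1)}$ standard-model queries regardless of sidedness, and the version you state with an extra $\log\tau$ is not a lemma anyone proves. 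Conversely, the ``known'' two-sided query lower bound of $\Omega(\tau)$ you invoke is not available: the paper only establishes $\Omega\big(\frac{\tau}{\log\tau}\big)$ for two-sided testers (\Cref{lemma:ds-lower-bound}), and even that is not a black-box citation --- it requires reducing to sample-based testers via the Goldreich--Ron theorem for symmetric properties and then embedding the Valiant--Valiant support-size hard distributions into strings of length $n$. The trivial $\Omega(\tau)$ bound holds only for one-sided testers (which must accept any sample with at most $\tau$ distinct symbols). Your two errors cancel numerically, but if you tried to execute the proof as written you would be stuck proving a false version of the reduction and citing a lower bound that does not exist; the correct bookkeeping is $r\geq \log(\rate+1)\cdot q_{\mathrm{std}}$ with $q_{\mathrm{std}}=\Omega(\tau/\log\tau)$ (two-sided) or $\Omega(\tau)$ (one-sided).
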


 To prove our randomness lower bound, we present a general reduction from randomness-efficient testers in the online-erasure model to query-efficient testers in the standard model. 
 \begin{lemma}[Generalization of \Cref{lem: rand-query reduction intro}]\label{lem: rand-query reduction}
        For all  $r,\rate\in\N$ and properties $\cP$, if
    $\cP$ is $\eps$-testable in the presence of a $\rate$-online-erasure adversary using only $r$ bits of randomness then $\cP$ is  $\eps$-testable in the standard model using at most $\frac r{\log(\rate+1)}$ queries. 
\end{lemma}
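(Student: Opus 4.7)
The plan is to design a specific $\rate$-online-erasure adversary strategy $\mathcal{A}^*$ with the property that, for every fixed random tape of $\cT$, the execution of $\cT$ against $\mathcal{A}^*$ receives non-$\perp$ responses on at most $r/\log(\rate+1)$ queries. Given such an $\mathcal{A}^*$, the standard-model tester $\cT'$ follows immediately: sample a random tape, internally simulate $\cT$ against $\mathcal{A}^*$, and query the real oracle only for the (few) queries that $\mathcal{A}^*$ would not erase.

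To construct $\mathcal{A}^*$, I will view $\cT$ as a family of $2^r$ deterministic algorithms $\{\cT_R\}_{R\in\{0,1\}^r}$ indexed by the random tape. For any history of (query, response) pairs, the next query $i_{t+1}(R)$ made by $\cT_R$ is a deterministic function of $R$ and the history. The adversary $\mathcal{A}^*$ maintains the set $S_t\subseteq\{0,1\}^r$ of tapes consistent with the observed history, starting from $S_0=\{0,1\}^r$. After each step, $\mathcal{A}^*$ partitions $S_t$ by the value of $i_{t+1}(\cdot)$ and erases the $\rate$ locations corresponding to the $\rate$ largest parts. This is clearly a valid $\rate$-online-erasure strategy, since the partition depends only on the public history.

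The main technical step is a shrinkage claim: whenever the tester's next query $i_{t+1}(R)$ is not erased, the new consistent set $S_{t+1}$ (comprising tapes that also query $i_{t+1}(R)$) has size at most $|S_t|/(\rate+1)$. To see this, note that the part of the partition containing $R$ is not among the $\rate$ largest parts, so there are at least $\rate+1$ parts at least as large as it; since these parts partition $S_t$, the part containing $R$ has size at most $|S_t|/(\rate+1)$. If $S_t$ has fewer than $\rate+1$ distinct next-query values, then $\mathcal{A}^*$ erases them all, and no subsequent query can be non-erased. Since $R\in S_t$ throughout forces $|S_t|\geq 1$, iterating the shrinkage bound shows that $\cT$ receives at most $\log_{\rate+1}(2^r)=r/\log(\rate+1)$ non-erased answers against $\mathcal{A}^*$.

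The standard-model tester $\cT'$ will sample $R$ uniformly, simulate $\cT_R$ and $\mathcal{A}^*$ in lockstep, and answer each simulated query either from a cache, by querying the real oracle when $\mathcal{A}^*$ would not erase the index (and caching the value), or with $\perp$ when $\mathcal{A}^*$ would. By the shrinkage bound, $\cT'$ makes at most $r/\log(\rate+1)$ real queries, and correctness is immediate because the simulated transcript is exactly a valid execution of $\cT$ against the $\rate$-online-erasure adversary $\mathcal{A}^*$, which $\cT$ is assumed to handle correctly. The shrinkage claim is where all the nontrivial work happens; the rest of the reduction is routine bookkeeping.
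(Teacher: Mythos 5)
Your proof is correct and follows essentially the same approach as the paper's: both construct a ``random-seed-elimination'' adversary that at each step erases the $\rate$ indices most popular among the next queries of consistent seeds, observe that each non-erased query shrinks the consistent seed set by a factor of at least $\rate+1$, and conclude that at most $r/\log(\rate+1)$ non-erased queries can occur before the seed is pinned down, whence the standard-model tester follows by simulating $\cT$ against this adversary and querying the real oracle only at non-erased positions. One small inaccuracy worth fixing: having fewer than $\rate+1$ distinct next-query values at step $t$ guarantees only that the \emph{current} query is erased, not all subsequent ones (after the forced $\perp$, $S_{t+1}$ may fan out into many new next-query values); this does not affect the shrinkage argument or the final bound, since the case is subsumed by ``the current query is erased.''
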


\begin{proof}
    Let $\cT$ be a $\rate$-online-erasure-resilient $\eps$-tester for $\cP$ that uses at most $r$ random bits. We construct an adversary that allows $\cT$ to make at most $\frac{r}{\log(\rate+1)}$ non-erased queries. Consider the following random seed elimination adversary  $\cA$.     Given an input $x$, the description of algorithm $\cT$, and query-answer history of $\cT$ on input $x$, the adversary simulates the next query of $\cT$ on every random seed $s\in\zo^r$ that is consistent with the query-answer history of $\cT$ thus far and erases the $\rate$ most queried indices of $x$. Intuitively, these are the indices that are most likely to be queried by the tester at this step, based on the random seeds that are consistent with the current query-answer history. Each index that the oracle decides not to erase at this step can appear in at most $\frac 1{\rate+1}$ fraction of currently relevant random seeds.  If $\cT$ queries such an index, all random seeds that would have lead to a different query can be eliminated. Thus, each time $\cT$ makes a non-erased query, at most a $\frac 1{\rate+1}$ fraction of the relevant random seeds remain. Consequently, $\cT$ can make at most $\frac r{\log(\rate+1)}$ non-erased queries before $\cA$ can exactly determine the random seed being used. Once $\cA$ determines the random seed, it can exactly predict the queries of $\cT$ and erase all of them. 

    Next, we use the adversary strategy $\cA$ to construct an $\frac r{\log(\rate+1)}$-query tester for $\cP$. Let $\cT^*$ be defined as follows. Draw a random seed $s^*\in\zo^r$ and simulate $\cT$ with random seed $s^*$. After making each query, $\cT^*$ simulates $\cA$ with the query-answer history of $\cT$. If $\cA$ erases the next query of $\cT$, then $\cT^*$ answers the query with $\perp$ and continues the simulation (the algorithm $\cT^*$ need not make any queries in this case). Otherwise, if $\cA$ does not erase the next query of $\cT$, then $\cT^*$ queries $x$, provides $\cT$ with the answer, and continues the simulation. Since $\cT$ makes at most $\frac r{\log(\rate+1)}$ non-erased queries and successfully tests $\cP$, the algorithm $\cT^*$ makes at most $\frac r{\log(\rate+1)}$ queries and also successfully tests $\cP$.  
\end{proof}

\subsection{The \texorpdfstring{$\tau$}{tau}-Distinct-Elements property}
The property testing problem we use to prove \Cref{thm: rand sep} is a version of support size approximation. In support size approximation, one is given sample access to a distribution $\cD$ over the domain $[n]$ and asked to approximate $|\supp(\cD)|$ up to additive error $\eps$. We study the analogous testing problem over strings: Given parameters $\tau\in\N$ and $\eps\in(0,1)$, and a string $x$ of length $\plen$, we wish to $\eps$-test whether $|\{x_1,x_2,\dots,x_\plen\}|\leq\tau$, that is, to determine whether $x$ has at most $\tau$ distinct elements or $x$ is $\eps$-far from every $x'$ (of length $\plen$) with at most $\tau$ distinct elements. 

\begin{definition}[$\tau$-Distinct-Elements]
    \label{def: de}
    For all $n\in\N$ and $\tau=\tau(n)$, let $\cP^{(\tau)}_n$ be the set of strings in $[n]^n$ with at most $\tau$ distinct symbols. The property  {\em$\tau$-Distinct-Elements} is defined as $\cP^{(\tau)}=\bigcup_{n\in\N} \cP^{(\tau)}_n$.
\end{definition}

The following tester for $\tau$-Distinct-Elements appeared in \cite{GoldreichR23}. We include 
the analysis of its guarantees here for completeness.

\begin{fact}[A tester for $\tau$-Distinct-Elements]
    \label{fact:de-tester}
      For all $\eps\in(0,1)$, there exists a one-sided error tester for $\cP^{(\tau)}$ that uses $O(\frac\tau\eps)$ samples and fails with probability at most $\frac 1 {10}$. 
\end{fact}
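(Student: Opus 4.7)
The plan is to analyze the natural tester $\cT$: sample $m = C\tau/\eps$ indices of $x$ uniformly at random for a sufficiently large absolute constant $C$, count the number $D$ of distinct values observed, and accept if and only if $D \leq \tau$. Completeness, yielding one-sided error, is immediate: any string in $\cP^{(\tau)}$ has at most $\tau$ distinct symbols, so $D \leq \tau$ with probability $1$.

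For soundness, assume $x$ is $\eps$-far from $\cP^{(\tau)}$, and let $f_a$ denote the fraction of positions in $x$ equal to $a$. The $\eps$-farness condition is equivalent to $\sum_{a \in T} f_a \leq 1 - \eps$ for every set $T$ of at most $\tau$ symbols; equivalently, the symbols outside the top $\tau$ by frequency collectively cover at least an $\eps$-fraction of $x$. The goal is to show $\Pr[D \leq \tau] \leq 1/10$. The main step is to lower-bound $\mathbb{E}[D] = \sum_a \bigl(1 - (1-f_a)^m\bigr)$ using the $\eps$-farness. Using $1 - (1-p)^m \geq 1 - e^{-mp}$, I would split the alphabet into \emph{heavy} symbols $H = \{a : f_a \geq 1/m\}$ and \emph{light} symbols: each heavy symbol contributes at least $1 - 1/e$ to $\mathbb{E}[D]$, and using $1 - e^{-y} \geq y/2$ for $y \in [0,1]$, the combined light contribution is at least $(m/2) \sum_{a \text{ light}} f_a$. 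A short case analysis on whether the $\geq \eps$ mass outside the top $\tau$ lies mostly on heavy or on light symbols then gives $\mathbb{E}[D] \geq (1 + \gamma)\tau$ for some constant $\gamma > 0$, once $C$ is chosen large enough.

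Concentration follows from McDiarmid's inequality, since $D$ is a $1$-Lipschitz function of the $m$ samples (changing any one sample alters $D$ by at most one):
$$\Pr[D \leq \tau] \leq \Pr\bigl[|D - \mathbb{E}[D]| \geq \gamma\tau/2\bigr] \leq 2 \exp\bigl(-\gamma^2 \tau\eps / (2C)\bigr),$$
which is at most $1/10$ as long as $\tau\eps$ exceeds a suitable absolute constant. The main obstacle is the residual regime of very small $\tau\eps$ (e.g., both $\tau$ and $\eps$ are absolute constants), where McDiarmid is too weak; for this regime I would instead use a direct union bound on $\Pr[V \subseteq S] \leq (1-\eps)^m$ over subsets $S$ of size $\tau$, where $V$ is the set of distinct values in the sample. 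Since $V$ itself has size at most $m$, only a small number of relevant subsets $S$ need to be considered; moreover, when the true support of $x$ is large, individual frequencies are forced to be small and $\mathbb{E}[D]$ is automatically large, so this case is already covered by the concentration argument above.
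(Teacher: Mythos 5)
There are two genuine gaps, both in the soundness analysis. \emph{Gap 1: the expectation bound does not follow from your case analysis.} In the case where the $\ge\eps$ mass outside the top $\tau$ symbols sits on \emph{heavy} symbols, each such symbol has frequency at most $f_\tau\le 1/\tau$ (it cannot be more frequent than the $\tau$-th most frequent symbol), so there are at least $\Omega(\eps\tau)$ of them, and the top $\tau$ symbols are heavy as well. Crediting each heavy symbol only $1-1/e$ yields $\Ex[D]\ge(1-1/e)\big(1+\Omega(\eps)\big)\tau$, which is strictly \emph{less} than $\tau$ for every $\eps<1$. To get $(1+\gamma)\tau$ you must argue that the top-$\tau$ symbols each contribute essentially $1$, which requires tracking how $1-e^{-mf_a}$ grows with $mf_a$ rather than using a single threshold at $1/m$; and when this is done, the achievable $\gamma$ degrades with $\eps$ (e.g., for the profile with roughly $\tau+\Theta(\eps\tau)$ symbols all at frequency $\approx(1-\eps)/\tau$, one gets only $\gamma=\Theta(\eps)$). \emph{Gap 2: concentration.} With $\gamma=\Theta(\eps)$, McDiarmid gives failure probability $\exp(-\Theta(\eps^3\tau))$, so the argument breaks whenever $\eps^3\tau$ is below a constant --- including the basic case where $\tau$ and $\eps$ are both constants. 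Your fallback for that regime is not worked out: the union bound over size-$\tau$ subsets $S$ with $\Pr[V\subseteq S]\le(1-\eps)^m$ ranges over exponentially many (in $\tau$) relevant subsets when $x$ has many distinct symbols, and the claim that a large support automatically returns you to the concentration regime is asserted, not proved.

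All of this machinery is avoidable, and the paper's proof sidesteps it entirely. Bound $\Pr[D\le\tau]$ directly by a stopping-time argument: as long as the sample has revealed at most $\tau$ distinct symbols, the next uniform sample is a \emph{new} symbol with probability at least $\eps$, because by $\eps$-farness any $\tau$ symbols cover at most a $1-\eps$ fraction of positions. Hence the number of samples needed to see $\tau+1$ distinct symbols is stochastically dominated by a sum of $\tau+1$ geometric variables with mean $1/\eps$; by Markov's inequality, $3(\tau+1)/\eps$ samples suffice with probability $2/3$, and a constant number of independent repetitions (which preserves one-sided error) brings the failure probability below $1/10$. This works uniformly in $\tau$ and $\eps$ and needs no concentration inequality. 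I recommend replacing your argument with this one, or at minimum patching both gaps above.
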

\begin{proof}
    We first make the following observation: if $x$ is $\eps$-far from $\cP^{(\tau)}$, then every collection of $\tau$ distinct elements occupies at most a $1-\eps$ fraction of the indices in $x$. Suppose we sample uniformly random elements from $x$ until we have seen $\tau+1$ distinct elements. Notice that while we have seen at most $\tau$ distinct elements, our next sample is a new distinct element with probability at least $\eps$ (the at most $\tau$ distinct elements we have seen can have probability mass at most $1-\eps$). Let $X_i$ be the number of samples to get the $i$-th distinct element given we have seen $i-1$ distinct elements, and set $X=\sum_{i=1}^{\tau+1} X_i$. Then $\Ex[X]\leq (\tau+1)/\eps$, and hence, by Markov's inequality, the probability we need more than $3(\tau+1)/\eps$ samples is at most $\frac 1 3$. This analysis inspires the following simple tester:

    {\em Sample $s=3(\tau+1)/\eps$ elements from $x$ and accept if and only if there are at most $\tau$ distinct symbols from $[n]$ in the sample (not counting the erasure symbol $\perp$).}
    
    Repeating the tester constantly many times suffices to make the failure probability smaller than $\frac 1{10}$.  
\end{proof}

Recall that \Cref{lem: rand-query reduction} states that to prove a lower bound on the randomness complexity of testing in the online model, it suffices to prove a lower bound on the query complexity of testing in the standard model. In \Cref{lemma:ds-lower-bound}, we show a lower bound on the query complexity of testing $\tau$-Distinct-Elements. Our lower bound is a direct corollary of the lower bound of \cite{ValiantV10}. 

\begin{lemma}[Query lower bound for testing $\tau$-Distinct-Elements]
    \label{lemma:ds-lower-bound}
    There exists $\eps\in(0,1)$ such that for all input lengths $n$, if $\tau\leq\frac{\sqrt\plen}{\log\plen}$, then 
    the query complexity of $\eps$-testing $\cP^{(\tau)}$ is $\Omega(\frac\tau{\log\tau})$. Moreover, the query complexity of $\eps$-testing $\cP^{(\tau)}$ with one-sided error is $\Omega(\tau)$.
\end{lemma}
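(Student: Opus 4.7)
The plan is to prove the two bounds separately: the two-sided $\Omega(\tau/\log\tau)$ via a reduction from the Valiant--Valiant distribution-support-size distinguishing problem, and the one-sided $\Omega(\tau)$ via a direct certificate argument on the identity string.

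For the two-sided bound, Valiant and Valiant give, for any $\tau$, a pair of distributions $D_1, D_2$ on a domain of size $O(\tau)$ with the following properties: (i) $|\supp(D_1)|\leq \tau$ and $|\supp(D_2)|\geq (1+\delta)\tau$; (ii) all positive probabilities are $\Omega(1/\tau)$; (iii) any algorithm distinguishing $q$ i.i.d.\ samples from $D_1$ vs $D_2$ with constant success probability has $q=\Omega(\tau/\log\tau)$. In the regime $\tau\leq\sqrt n/\log n$, the domain embeds into $[n]$ and the positive probabilities can be taken as integer multiples of $1/n$. Moreover, since their hard distributions are essentially uniform on their supports, under $D_2$ every set of $\tau$ elements has mass at most $1-\eps$ for a constant $\eps>0$. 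I then define strings $x_1, x_2\in [n]^n$ by placing each value $v$ at exactly $n\cdot D_j(v)$ positions in $x_j$. By (i), $x_1\in\cP^{(\tau)}_n$, while the strengthening of (i) guarantees that every set of $\tau$ symbols covers at most $(1-\eps)n$ positions of $x_2$, so $x_2$ is $\eps$-far from $\cP^{(\tau)}_n$.

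To convert the sample lower bound into a query lower bound, I invoke Yao's principle with the input distribution that outputs a uniformly random permutation of $x_j$ with probability $1/2$ for each $j\in\{1,2\}$. Under a random permutation, each query to a previously unqueried position returns a fresh sample from the empirical distribution $D_j$. Since the resulting $q=O(\tau/\log\tau)=o(\sqrt n)$, sampling without replacement is within negligible total variation distance of sampling with replacement; hence any $q$-query tester for $\cP^{(\tau)}$ yields a $q$-sample distinguisher for $D_1$ vs $D_2$, forcing $q=\Omega(\tau/\log\tau)$. The constraint $\tau\leq \sqrt n/\log n$ is precisely what keeps this reduction inside the without-replacement approximation regime. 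For the one-sided bound, consider the input $x^*=(1,2,\ldots,n)\in[n]^n$, which has $n$ distinct symbols and hence is $\eps$-far from $\cP^{(\tau)}_n$ for $\tau\leq (1-\eps)n$. Every query to $x^*$ returns a distinct value, and as long as the tester has seen only $k\leq\tau$ distinct values, its query history is consistent with a string in $\cP^{(\tau)}_n$ obtained by filling unqueried positions with a previously seen value. A one-sided tester can reject only when no consistent input lies in $\cP^{(\tau)}_n$, so it must witness at least $\tau+1$ distinct values, which requires at least $\tau+1$ queries. To reject with probability $\geq 2/3$, this forces $\Omega(\tau)$ queries.

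The main obstacle I expect is the careful extraction from Valiant and Valiant's distributions of a pair with probabilities exactly on the $1/n$ grid that also exhibits the stronger $\eps$-farness property (every $\tau$-subset of $\supp(D_2)$ has mass $\leq 1-\eps$, rather than merely $|\supp(D_2)|\geq (1+\delta)\tau$). Both are routine: rounding handles the first, and the near-uniformity of the Valiant--Valiant hard distributions on their supports handles the second, but the bookkeeping must be tracked so that the distinguishing lower bound and the $\eps$-farness guarantee remain simultaneously satisfied.
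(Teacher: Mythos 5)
Your proposal is correct, and its core---embedding the Valiant--Valiant hard pair of distributions as strings in $[n]^n$ so that one string lies in $\cP^{(\tau)}$ and the other is $\eps$-far---is exactly the paper's reduction. Where you diverge is in converting the sample lower bound into a query lower bound: the paper invokes the Goldreich--Ron theorem that any $q$-query tester for a symmetric property can be converted into an $O(q)$-sample tester, and then works entirely with sample-based testers, whereas you run Yao's principle over uniformly random permutations of the two strings and use a birthday bound to pass from without-replacement sampling to i.i.d.\ sampling. Your route is more self-contained and makes explicit one role of the hypothesis $\tau\le\sqrt n/\log n$ (keeping $q=o(\sqrt n)$ so the coupling loss is negligible); the paper's route is a one-line citation that avoids the collision bound and uses $\tau\le\sqrt n/\log n$ only to make the discretization error of order $m/n$ negligible in total variation. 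Your one-sided argument (a one-sided tester cannot reject until its view contains $\tau+1$ distinct symbols, instantiated on the identity string) is the same certificate argument the paper gives on the sample-based side. One caution: the properties (i) and (ii) you list for the Valiant--Valiant pair do not by themselves imply that $x_2$ is $\eps$-far---you need that $D_2$ is far in distance from \emph{every} distribution of support size at most $\tau$, not merely that its own support is larger; you correctly flag this, and the statement the paper extracts from Valiant--Valiant is precisely this farness guarantee, so the bookkeeping you anticipate is already packaged in their result.
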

\begin{proof}
    We use the following fact to argue that we can assume the tester is sample-based with no loss of generality. 

    \begin{fact}[Theorem 6.1 \cite{GoldreichR15}]
    \label{fact: GR15}
    A property of strings is \emph{symmetric} if it is closed under permutations of the string indices (that is, under reordering of the characters in the string). Let $\cP$ be a symmetric property of strings that is $\eps$-testable with query complexity $q$. Then there exists a sample-based tester for $\cP$ with sample complexity $O(q)$.
    \end{fact}
    Since $\cP^{(\tau)}$ is symmetric, we can without loss of generality consider sample-based testers. First, we prove the lower bound for two-sided error testers. The essence of our proof is a reduction from the hard instances used in \cite{ValiantV10} for estimating the support size of a distribution. While we cannot use the lower bound of \cite{ValiantV10} as a black box, we can nonetheless leverage their hard instances. Indeed, combining Theorem 1, Fact 5, and the remarks regarding the support size of their hard distributions (following Definition 12) from \cite{ValiantV10}, we obtain the following immediate corollary.

    \begin{fact}[Support size approximation lower bound  \cite{ValiantV10}]
        \label{fact:VV10-thm1}
        Let $\alpha>0$ be a sufficiently small constant. There exist distributions $\cD^+$ and $\cD^-$ over $[m]$ such that
        \begin{enumerate}
            \item Every element in the support of $\cD^+$ or $\cD^-$ has probability mass at least $\frac1m$.
            \item $|\supp(\cD^+)|\leq m\left(\frac12+\alpha\right)$, and $\cD^-$ is $\left(\frac12-2\alpha\right)$-far from every distribution with support size at most $m\left(\frac12+\alpha\right)$.
        \end{enumerate}
        Moreover, every algorithm that successfully distinguishes $\cD^+$ from $\cD^-$ with constant probability requires $\Omega\left(\frac{m}{\log m}\right)$ samples.         
    \end{fact}

    While the distributions in \cite{ValiantV10} are defined over $\R$, they all have support size at most $m$. Thus, we can, without loss of generality, consider distributions over $[m]$. Indeed, given a distribution $\cD$ over $\R$, the distribution $\cD^\pi$ obtained by choosing a uniformly random permutation $\pi$ of $[m]$, and then mapping the $i$-th element in the support of $\cD$ to $\pi(i)$, has domain $[m]$ and the same support size as $\cD$. Moreover, if two distributions $\cD_0$ and $\cD_1$ each have support size at most $m$, then for $b\in\zo$, the distribution $\cD^\pi_b$ can be generated by first, sampling a random permutation $\pi$ of $[m]$, second, sampling from $\cD_b$, and third, sending each distinct sample to the first available element of $\pi$. Thus, any algorithm which distinguishes $\cD^\pi_1$ and $\cD^\pi_0$ can be used to distinguish $\cD_0$ and $\cD_1$ with the same sample complexity.
    
    To apply \Cref{fact:VV10-thm1} in our setting, we note that whenever $m\leq \frac{\sqrt\plen}{\log\plen}$, there exists $x\in[m]^\plen$ such that the distribution obtained by sampling a uniform index in $x$ has total variation distance at most $\frac m\plen=\frac{1}{\sqrt\plen\log\plen}$ from $\cD$. Thus, for  sufficiently large $\plen$, there exist strings $x^+,x^-\in[m]^\plen$ such that $x^+$ has at most $m(\frac12+\alpha)$ distinct symbols, and the distributions generated by sampling from $x^+$ and $x^-$ have total variation distance $\frac{1}{\sqrt\plen\log\plen}$ from $\cD^+$ and $\cD^-$, respectively. Let $\tau=m(\frac12+\alpha)$ and $\eps=\frac12-2\alpha-o(1)$, then $x^+\in\cP^{(\tau)}$ and $x^-$ is $\eps$-far from $\cP^{(\tau)}$. By \Cref{fact:VV10-thm1}, every algorithm for distinguishing $x^+$ from $x^-$ uses $\Omega\left(\frac{m}{\log m}\right)=\Omega\left(\frac{\tau}{\log\tau}\right)$ samples. By \Cref{fact: GR15}, every tester for $\cP^{(\tau)}$ has query complexity $\Omega\left(\frac{\tau}{\log\tau}\right)$. 

    To prove the one-sided error lower bound, it suffices to note that every one-sided error algorithm must accept if its sample contains at most $\tau$ symbols. Thus, every one-sided error tester requires $\Omega(\tau)$ samples.
    This completes the proof of \Cref{lemma:ds-lower-bound}.
\end{proof}

\subsection{Proof of the randomness separation (\protect{\Cref{thm: rand sep}})}

To complete the proof of the theorem, we argue that the tester given by \Cref{fact:de-tester} is a valid tester for $\tau$-Distinct-Elements in both the online and offline erasure models. To prove the lower bound on the randomness complexity of testing $\tau$-Distinct-Elements in the online model, we combine the query lower bound of \Cref{lemma:ds-lower-bound} with the reduction stated in \Cref{lem: rand-query reduction}. To prove the upper bound on the randomness complexity of testing $\tau$-Distinct-Elements in the offline model, we argue that the offline tester can be simulated with few bits of randomness at a small cost in error probability.

We first prove the part of the theorem regarding testing $\tau$-Distinct-Elements in the offline erasure model.

\begin{proof}[Proof of \protect{\Cref{thm: rand sep}}, \protect{\Cref{thm: rand sep item 1}}]
     First, for all $\tau\in\N$, we construct an offline-erasure-resilient tester for $\tau$-Distinct-Elements, and second, we argue that the tester can be simulated with $O(\log \plen)$ random bits and no additional queries. Let $\cT$ be the tester given by \Cref{fact:de-tester}. Recall that $\cT$ accepts if and only if the number of nonerased symbols in the sample is at most $\tau$. Below, we argue that $\cT$ is an $\alpha$-offline-erasure-resilient $\eps$-tester. 
     
     Let $x$ be an $\alpha$-erased string in $[n]^n$. If some completion of $x$ is in $\cP^{(\tau)}$, then the number of nonerased symbols in the sample drawn by $\cT$ is at most $\tau$, so the algorithm accepts. Next we argue that $\cT$ accepts with probability at most $\frac 1 4$ when every completion of $x$ is $\eps$-far from $\cP^{(\tau)}$. Let $x^{\perp}$ denote the erased portion of $x$, and $x^{\top}$ denote the nonerased portion of $x$. Notice that $|x^{\top}|=(1-\alpha)n$, and thus if $x^\top$ is $\frac{\eps}{1-\alpha}$-close to $\cP^{(\tau)}$, then there is a completion of $x$ that is $\eps$-close to $\cP^{(\tau)}$ (change $\eps n$ symbols in $x^{\top}$ and fill out $x^{\perp}$ with an arbitrary symbol from $x^{\top}$). 
     Thus, if every completion of $x$ is $\eps$-far from $\cP^{(\tau)}$, then $x^{\top}$ is $\frac{\eps}{1-\alpha}$-far from $\cP^{(\tau)}$. 
     By the proof of \Cref{fact:de-tester}, a set of $O\big(\frac{\tau(1-\alpha)}{\eps}\big)$ uniform and independent samples from $x^{\top}$ contains at least $\tau+1$ distinct symbols with probability at least $\frac 9 {10}$. Moreover, a sample of $O\big(\frac{\tau}{\eps}\big)$ elements from $x$ contains $\Omega\big(\frac{\tau(1-\alpha)}{\eps}\big)$ such samples from $x^{\top}$ with probability at least $\frac 9 {10}$. By the union bound, $\cT$ accepts $x$ with probability at most $\frac 1 4$.  

    It remains to show that $\cT$ can be simulated by a tester $\cT'$ that uses $O(\log \plen)$ random bits and has the same query complexity as $\cT$. \Cref{fact: GS10} states that any randomized oracle machine that solves a promise problem can be simulated using $\log\plen+\log\log|\Sigma|+O(1)$ random bits, where $|\Sigma|$ is the size of the alphabet. This fact was originally proven by Goldreich and Sheffet \cite{GS10} and is stated as an exercise in \cite{Go-book}. We use the statement from the exercise since it is more convenient for our setting.  

    \begin{fact}[Exercise 1.21 \cite{Go-book}]
    \label{fact: GS10}
    Fix $\plen\in\N$ and let $\Pi$ be a promise problem regarding strings in $\Sigma^\plen$. Suppose that $\cM$ is a randomized $q$-query oracle machine that solves $\Pi$ with error probability at most $\frac 14$. Then there exists a randomized $q$-query oracle machine $\cM'$ that solves $\Pi$ with error probability at most $\frac 13$ and $\log \plen+\log\log |\Sigma|+O(1)$ random bits. 
    \end{fact}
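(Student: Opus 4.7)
The plan is to prove the fact by the probabilistic method, finding a small set of random seeds that ``derandomizes" $\cM$ while preserving the query behavior. Let $R\subseteq\zo^*$ denote the random-seed space of $\cM$ and, for each input $x\in\Sigma^\plen$ lying in the promise (YES or NO) of $\Pi$, let $f_x(s)\in\zo$ indicate whether $\cM$ answers correctly on $x$ using seed $s$. By hypothesis, $\Pr_{s\sim R}[f_x(s)=1]\geq \tfrac{3}{4}$ for every promise input $x$. The goal is to exhibit a small multiset $S^*=\{s_1,\dots,s_k\}$ of seeds such that for every such $x$, at least a $\tfrac{2}{3}$-fraction of the seeds in $S^*$ lead to a correct answer; then $\cM'$ picks $i\in[k]$ uniformly at random, runs $\cM$ with hardcoded seed $s_i$, and makes exactly the same queries as $\cM$ with that seed.

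The key step is a Chernoff--plus--union-bound argument. Draw $S=\{s_1,\dots,s_k\}$ i.i.d. uniformly from $R$ and set $Y_x=\frac{1}{k}\sum_{i=1}^{k}f_x(s_i)$. Since $\Ex[Y_x]\geq \tfrac{3}{4}$ and the summands are independent Bernoullis, a standard Chernoff bound gives $\Pr[Y_x<\tfrac{2}{3}]\leq e^{-\Omega(k)}$. Union-bounding over the at most $|\Sigma|^\plen$ promise inputs yields
\[
\Pr\!\left[\exists x:\ Y_x<\tfrac{2}{3}\right]\ \leq\ |\Sigma|^{\plen}\cdot e^{-\Omega(k)}.
\]
Choosing $k=c\cdot \plen\log|\Sigma|$ for a sufficiently large absolute constant $c$ drives this quantity strictly below $1$, so some concrete multiset $S^*$ of size $k$ simultaneously satisfies $Y_x\geq \tfrac{2}{3}$ for all promise inputs.

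Fixing such an $S^*$, the derandomized tester $\cM'$ samples $i\in[k]$ uniformly at random and executes $\cM$ with seed $s_i$. Its query complexity is exactly $q$ (it only restricts the seed space), its error probability on every promise input is at most $\tfrac{1}{3}$, and its randomness complexity is $\lceil\log_2 k\rceil=\log \plen+\log\log|\Sigma|+O(1)$, matching the claim. The main obstacle is the tight balance between the Chernoff exponent and the $|\Sigma|^\plen$ union bound: any weaker amplification of the per-input tail bound would force $k$ to grow beyond $\plen\log|\Sigma|$, ruining the $\log\plen+\log\log|\Sigma|+O(1)$ randomness budget; this is precisely why the original $\tfrac{1}{4}$ error gap is exploited via Chernoff rather than a Markov-style argument, and why the proof is nonconstructive (the seed set $S^*$ is hardcoded into $\cM'$ rather than produced by an explicit pseudorandom generator).
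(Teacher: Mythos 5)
Your proof is correct and is exactly the standard argument for this fact (the Chernoff-plus-union-bound seed-selection argument of Goldreich and Sheffet): the paper itself does not prove this statement but cites it as Exercise 1.21 of \cite{Go-book}, and your write-up matches the intended solution, including the correct accounting $\lceil\log_2 k\rceil=\log \plen+\log\log|\Sigma|+O(1)$ for $k=O(\plen\log|\Sigma|)$ and the observation that hardcoding a seed preserves the query complexity $q$.
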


    Let $\Pi$ be the promise problem defined by $\alpha$-offline-erasure-resilient $\eps$-testing $\cP^{(\tau)}$. By the first part of the proof, there exists a randomized oracle machine $\cT$ that solves $\Pi$ with error probability at most $\frac 14$. 
   Applying \Cref{fact: GS10} directly, we see that $\cT$ can be simulated by a tester $\cT'$ using $\log \plen+\log\log|\Sigma|+O(1)$ random bits; since $\Sigma=[n]$, tester $\cT'$ uses $O(\log \plen)$ random bits, and has error probability at most $\frac 13$. To complete the proof, it suffices to note that $\cT'$ has the same query complexity as $\cT$.
\end{proof}

Next, we prove the part of the theorem regarding testing $\tau$-Distinct-Elements in the online erasure model.

\begin{proof}[Proof of \protect{\Cref{thm: rand sep}}, Items \ref{thm: rand sep item 2} and \ref{thm: rand sep item 3}]
    We will show that for all $\tau\in\N$, the tester given by \Cref{fact:de-tester} is online-erasure-resilient. Let $\cT$ be the tester for $\tau$-Distinct-Elements given by \Cref{fact:de-tester}. Recall that $\cT$ accepts if and only if the number of nonerased symbols in its sample is at most $\tau$. We show that $\cT$ can test in the presence of a $\rate$-online-erasure-oracle, by arguing that with high constant probability, none of the queries made by $\cT$ are erased. Notice that with proximity parameter $\eps$, the tester makes $O\left(\frac\tau\eps\right)$ queries. Thus, there are $O\left(\frac{\rate\cdot\tau}{\eps}\right)$ erasures at every point during the execution of $\cT$. Since $\cT$ makes uniform and independent queries, the probability that any particular query is erased is $O\left(\frac{\rate\cdot \tau}{\plen\eps}\right)$. Since $\cT$ makes $O\left(\frac{\tau}{\eps}\right)$ queries, we can apply the union bound to see that some query is erased with probability at most $O\left(\frac{\rate \cdot\tau^2}{\plen\eps^2}\right)$. By the hypothesis that $\rate\leq \left(\frac{.01\eps\sqrt \plen}{\tau}\right)^2$, we have $O\left(\frac{\rate \cdot\tau^2}{\plen\eps^2}\right)\leq.01$ -- that is, $\cT$ sees an erasure with probability at most $.01$. By \Cref{fact:de-tester}, the algorithm $\cT$ is a $\rate$-online-erasure-resilient $\eps$-tester for $\cP^{(\tau)}$. Moreover, since an erasure cannot increase number of distinct nonerased symbols in $x$, tester $\cT$ has one-sided error. This completes the proof of \Cref{thm: rand sep item 2}.
    
    To see why \Cref{thm: rand sep item 3} holds, recall that by \Cref{lemma:ds-lower-bound}, there exists $\eps\in(0,1)$ such that for all $\plen,\tau\in\N$ with $\tau\leq \frac{\sqrt \plen}{\log \plen}$, every $\eps$-tester for $\cP^{(\tau)}$ has query complexity $\Omega\left(\frac\tau{\log\tau}\right)$. Similarly every tester with one-sided error has query complexity $\Omega(\tau)$. Thus, applying \Cref{lem: rand-query reduction}, the reduction from query complexity to randomness complexity, suffices to complete the proof. 
\end{proof}

\section*{Acknowledgment}
The authors thank Uri Meir for fruitful discussions regarding the online model. 
\bibliographystyle{alpha}  
\bibliography{ref} 

\end{document}